\theoremstyle{plain}
\newtheorem{theorem}{Theorem}
\newtheorem{proposition}{Proposition}
\newtheorem{lemma}{Lemma}
\newtheorem{corollary}{Corollary}
\newtheorem{remark}{Remark}
\theoremstyle{definition}
\newtheorem{definition}{Definition}
\newtheorem{example}{Example}
\theoremstyle{remark}
\newcommand{\beq}{\begin{eqnarray}}
\newcommand{\eeq}{\end{eqnarray}}
\newcommand{\field}[1]{\mathbb{#1}}
\newcommand{\F}{\field{F}}
\newfont{\bbb}{msbm10 scaled 500}
\newfont{\bb}{msbm10 scaled 1100}
\newcommand{\FF}{\mbox{\bb F}}
\newcommand{\bv}{{\bf b}}
\newcommand{\cv}{{\bf c}}
\newcommand{\mv}{{\bf m}}
\newcommand{\xv}{{\bf x}}
\newcommand{\yv}{{\bf y}}
\newcommand{\Ac}{{\cal A}}
\newcommand{\Cc}{{\cal C}}
\newcommand{\Ec}{{\cal E}}
\newcommand{\Gc}{{\cal G}}
\newcommand{\Hc}{{\cal H}}
\newcommand{\Ic}{{\cal I}}
\newcommand{\Lc}{{\cal L}}
\newcommand{\Rc}{{\cal R}}
\newcommand{\Sc}{{\cal S}}
\newcommand{\Tc}{{\cal T}}
\newcommand{\Uc}{{\cal U}}
\newcommand{\Vc}{{\cal V}}
\definecolor{OXO-emph}{RGB}{153,0,0}
\newcommand\ceilb[1]{\left\lceil #1 \right\rceil}
\newcommand\floorb[1]{\left\lfloor #1 \right\rfloor}
\newcommand{\algrule}[1][.2pt]{\par\vskip.5\baselineskip\hrule height #1\par\vskip.5\baselineskip}
\DeclareMathAlphabet{\mathpzc}{OT1}{pzc}{m}{it}
\newcommand\nc\newcommand
\nc\bfa{{\boldsymbol a}}\nc\bfA{{\boldsymbol A}}\nc\cA{{\mathcal A}}
\nc\bfb{{\boldsymbol b}}\nc\bfB{{\boldsymbol B}}\nc\cB{{\mathcal B}}
\nc\bfc{{\boldsymbol c}}\nc\bfC{{\boldsymbol C}}\nc\cC{{\mathcal C}}
\nc\sC{{\mathscr C}}
\nc\bfd{{\boldsymbol d}}\nc\bfD{{\boldsymbol D}}\nc\cD{{\mathcal D}}
\nc\bfe{{\boldsymbol e}}\nc\bfE{{\boldsymbol E}}\nc\cE{{\mathcal E}}
\nc\bff{{\boldsymbol f}}\nc\bfF{{\boldsymbol F}}\nc\cF{{\mathcal F}}
\nc\bfg{{\boldsymbol g}}\nc\bfG{{\boldsymbol G}}\nc\cG{{\mathcal G}}
\nc\bfh{{\boldsymbol h}}\nc\bfH{{\boldsymbol H}}\nc\cH{{\mathcal H}}
\nc\bfi{{\boldsymbol i}}\nc\bfI{{\boldsymbol I}}\nc\cI{{\mathcal I}}
\nc\bfj{{\boldsymbol j}}\nc\bfJ{{\boldsymbol J}}\nc\cJ{{\mathcal J}}
\nc\bfk{{\boldsymbol k}}\nc\bfK{{\boldsymbol K}}\nc\cK{{\mathcal K}}
\nc\bfl{{\boldsymbol l}}\nc\bfL{{\boldsymbol L}}\nc\cL{{\mathcal L}}
\nc\bfm{{\boldsymbol m}}\nc\bfM{{\boldsymbol M}}\nc\sM{{\mathscr M}}
\nc\bfn{{\boldsymbol n}}\nc\bfN{{\boldsymbol N}}\nc\cN{{\mathcal N}}
\nc\bfo{{\boldsymbol o}}\nc\bfO{{\boldsymbol O}}\nc\cO{{\mathcal O}}
\nc\bfp{{\boldsymbol p}}\nc\bfP{{\boldsymbol P}}\nc\cP{{\mathcal P}}
\nc\bfq{{\boldsymbol q}}\nc\bfQ{{\boldsymbol Q}}\nc\cQ{{\mathcal Q}}
\nc\bfr{{\boldsymbol r}}\nc\bfR{{\boldsymbol R}}\nc\cR{{\mathcal R}}
\nc\bfs{{\boldsymbol s}}\nc\bfS{{\boldsymbol S}}\nc\cS{{\mathcal S}}
\nc\bft{{\boldsymbol t}}\nc\bfT{{\boldsymbol T}}\nc\cT{{\mathcal T}}
\nc\bfu{{\boldsymbol u}}\nc\bfU{{\boldsymbol U}}\nc\cU{{\mathcal U}}
\nc\bfv{{\boldsymbol v}}\nc\bfV{{\boldsymbol V}}\nc\cV{{\mathcal V}}
\nc\bfw{{\boldsymbol w}}\nc\bfW{{\boldsymbol W}}\nc\cW{{\mathcal W}}
\nc\bfx{{\boldsymbol x}}\nc\bfX{{\boldsymbol X}}\nc\cX{{\mathcal X}}
\nc\bfy{{\boldsymbol y}}\nc\bfY{{\boldsymbol Y}}\nc\cY{{\mathcal Y}}
\nc\bfz{{\boldsymbol z}}\nc\bfZ{{\boldsymbol Z}}\nc\cZ{{\mathcal Z}}
\newcommand{\remove}[1]{}
\newcommand\ff{{\mathbb F}}
\begin{document}

\sloppy

\title{Cooperative Local Repair in Distributed Storage}

\author{Ankit~Singh~Rawat, Arya Mazumdar, and~Sriram~Vishwanath
\thanks{A.~S.~Rawat and S.~Vishwanath are with the Laboratory of Informatics, Networks and Communications, Department of Electrical and Computer Engineering, The University of Texas at Austin, Austin, TX 78751 USA. E-mail: ankitsr@utexas.edu, sriram@austin.utexas.edu.}
\thanks{A.~Mazumdar is with the Department of Electrical and Computer Engineering, University of Minnesota -- Twin Cities, Minneapolis, MN 55455 USA. E-mail: arya@umn.edu.}
\thanks{This paper was presented in parts at the 48th Annual Conference on Information Sciences and Systems (CISS), March 2014.}}

\maketitle

\begin{abstract} 
Erasure-correcting codes, that support {\em local repair} of codeword symbols, have attracted 
substantial attention recently  for their application in distributed storage systems.
This paper investigates a generalization of the usual locally repairable codes. In particular, this paper studies a class of  codes with the following property: any small set of  codeword symbols can be reconstructed (repaired) from a small number of other symbols. This is referred to as {\em cooperative local repair}. The main contribution of this paper is bounds on the trade-off of the minimum distance and the dimension of such codes, as well as explicit constructions of families of codes that enable cooperative local repair.  Some other results regarding cooperative local repair are also presented, including an analysis for the well-known Hadamard/Simplex codes.
\end{abstract}


\begin{IEEEkeywords}
Coding for distributed storage, 
locally repairable codes,
codes on graphs, cooperative repair.
\end{IEEEkeywords}




\section{Introduction}

In this paper we explore a new class of codes that enable efficient recovery from the failure of multiple code symbols. In particular, we study codes with $(r, \ell)$-cooperative locality which allow for any $\ell$ failed code symbols to be recovered by contacting at most $r$ other intact code symbols. Our study of such codes is motivated by their application in distributed storage systems {\em a.k.a.} cloud storage, where information is stored over a network of storage nodes (disks). In order to protect the stored information against inevitable node (disk) failures, a distributed storage system encodes the information using an erasure-correcting code. The code symbols from the obtained codeword are then stored on the nodes in the system. {Each node stores one code symbol from the codeword.}

The task of recovering the code symbols stored on failed nodes with the help of the code symbols stored on intact nodes is referred to as {\em code repair} or {\em node repair}~\cite{dimakis}. An erasure-correcting code with an efficient code repair process helps quickly restore the  state after node failures. This consequently enables seamless operation of the system for a long time period. Recently, multiple classes of erasure-correcting codes have been proposed that optimize the code repair process with respect to various performance metrics. In particular, the codes that minimize {\em repair-bandwidth}, {\em i.e.}, the number of bits communicated during repair of a single node, are studied in \cite{dimakis,RSK11,zigzag13,PapDimCad_Hadamard} and references therein. The codes that enable small disk-I/O during the repair process are studied in \cite{khan2011search,zigzag13}. Another family of erasure codes that focus on small {\em locality}, {\em i.e.}, enabling repair of a single failed code symbol by contacting a small number of other code symbols, are presented in \cite{Gopalan12, PapDim12, oggier_hom, RKSV12, KPLK12, TamoBarg}. 

A code is said to have {\em all-symbol locality} $r$ if every code symbol is a function of at most $r$ other code symbols. This ensures {\em local repair} of each code symbol by contacting at most $r$ other code symbols. In this paper we generalize the notion of codes with all-symbol locality to codes with $(r, \ell)$-cooperative locality: any set of $\ell$ code symbols are functions of at most $r$ other code symbols. This allows for {\em cooperative local repair} of code symbols,  where any group of $\ell$ failed code symbols is repaired by contacting at most $r$ other code symbols. 

The ability to perform code repairs involving more than one failure is a desirable feature in most of the distributed storage systems that can experience multiple simultaneous failures~\cite{ford}. Moreover, this property also allows for deliberately delaying code repairs when system resources need to be freed to support other system objectives, {\em e.g.}, queries (accesses) to the stored information by clients. Here, we note that the approach of cooperative code repair has been previously explored in the context of repair-bandwidth efficient codes in \cite{Kermarrec, ShumHu} and references therein. 

In this paper we address two important issues regarding codes with $(r, \ell)$-cooperative locality: 1) obtaining trade-offs among minimum distance, dimension (rate), and locality parameters $(r, \ell)$ for such code; and 2) presenting explicit constructions for codes with $(r, \ell)$-cooperative locality that are close to the obtained trade-offs. Towards designing codes with $(r, \ell)$-cooperative locality, we mainly focus on codes with maximum possible rate. We construct a code with $(r,\ell)$-cooperative locality that has  rate at least $\frac{r - \ell}{r + \ell}$. This code construction is based on the regular bipartite graphs with girth at least $\ell+1$. In the light of an upper bound  $\frac{r}{r + \ell}$ on the rate of a code  with $(r, \ell)$-cooperative locality that we show later, this construction provides codes that are very close to being optimal. Here, we also note that there are explicit constructions for the regular bipartite graphs with large girth~\cite{LazebnikC}. Thus, one can obtain high (almost optimal) rate codes with $(r, \ell)$-cooperative locality for distributed storage systems. Note that, a minimum distance is not guaranteed in this construction. We also show that the codes based on expander graphs  enable cooperative local repairs while maintaining both high rate and good minimum distance.

Given a large number of parity constraints with low weights, expander graph based codes are natural candidates for codes to enable locality. However, these codes are overkill when one is interested in code repair of single failed symbol and codes with significantly better rate vs. distance trade-off can be obtained~\cite{Gopalan12, PKLK12, RKSV12, TamoBarg}. But as we aim to recover from multiple failures in a local manner, these codes become an attractive option.
\renewcommand{\arraystretch}{1.5}
\begin{table*}[htbp]
\begin{center}
    \begin{tabular}{|p{4.1cm}| c | c | c |} 
    \hline
    Construction & \pbox{20cm}{Cooperative locality} & \pbox{20cm}{Rate ${\rm rate}(\Cc)$} & \pbox{20cm}{Minimum distance $d_{\min}(\Cc)$} \\[5pt] \hline \hline 
    Partition code (Sec.~\ref{subsec:partition})  & $(r,\ell)$& $\frac{r}{r + \ell^2}$ & $n - k + 1 - \ell\big(\frac{k\ell}{r} - 1\big)$ \\[5pt]\hline
    Product code (Sec.~\ref{subsec:product}) & $(r,\ell)$ & $\big(\frac{r}{r+1}\big)^{\ell}$ & $\ell + 1$ \\[5pt] \hline
    Concatenated code (Sec.~\ref{sec:concatenated}) & $(r,\ell)$ & $\frac{\ell + 2}{\ell + 4}\frac{r}{r + 2}$ & $\ell + 1$ \\[5pt] \hline
     \pbox{20cm}{Regular bipartite graph based \\ code (Sec.~\ref{sec:girth})}  & $(r,\ell)$&$ \geq \frac{r - \ell}{r + \ell}$ & $\geq \ell + 1$ \\[5pt] \hline
      \pbox{20cm}{Unbalanced bipartite expander \\ graph based code (Sec.~\ref{subsec:unbalanced})}  & $(r,\ell)$ & $\geq 1 + \frac{h}{\Delta}\frac{r}{\ell} - h$ & $\geq \Big(2 - \epsilon-\frac{\epsilon}{t}\Big)\alpha n$ \\[5pt] \hline 
    \pbox{20cm}{Double cover of regular expander \\ graph based code (Sec.~\ref{subsec:double_cover})} & $(r,\ell)$ & $\geq 2\frac{r}{\ell \Delta} - 1$ & $\delta(\delta - \frac{\lambda}{\Delta})n$ \\[5pt] \hline
    Hadamard code (Sec.~\ref{subsec:Hadamard})  &  \pbox{20cm}{$(r = \ell+1,\ell)$, \\ $\forall 1\le \ell \le \frac{n-1}2$} & $\frac{\log(n+1)}{n}$ & $\frac{n+1}2$ \\[5pt] \hline
     \end{tabular}
     
     \vspace{0.1in}
     \caption{Summary of the constructions of codes with $(r, \ell)$-cooperative locality considered in this paper. For the codes based on unbalanced bipartite expander graphs (Sec.~\ref{subsec:unbalanced}), we assume that the underlying bipartite graphs is bi-regular with $h$ and $\Delta$ representing its left and right degrees, respectively. Moreover, the graph exhibits expansions from left to right of any set of at most $\alpha n$ left nodes with expansion ratio $h(1- \epsilon)$. Here the constituent local codes have distance at least $t+1$. For codes based on double cover of regular expander graphs (Sec.~\ref{subsec:double_cover}), $\Delta$ and $\lambda$ denote the degree and the second largest absolute eigenvalue of the underlying graph. This construction utilizes smaller code of minimum distance at least $\delta \Delta$ to define local constraints at the vertices of the double cover.}
      \label{table:parameters}
\end{center}
\end{table*}


\subsection{Contributions and organization}
{
In Section~\ref{sec:def}, we first present a formal definition of codes with $(r, \ell)$-cooperative locality and highlight the connections between the notion of cooperative locality as defined in this paper and various other contemporary notions from distributed storage literature~\cite{LlHolOgg2013, TamoBarg, TamoBound, WangZhang, availability, PKLK12, RKSV12, KPLK12} that aim to generalize locally repairable codes (LRCs)~\cite{Gopalan12, PapDim12}. In Section~\ref{sec:disjoint_groups}, we comment on the cooperative locality parameters of the codes with multiple small sized disjoint repair groups for each code symbol~\cite{LlHolOgg2013}. In Section~\ref{sec:rdelta}, we highlight both the differences and similarities between the codes with $(r, \ell)$-cooperative locality and the codes with $(\tilde{r}, \delta)$-locality~\cite{PKLK12}.
}

In Section~\ref{sec:dmin_bound}, we obtain an upper bound on the minimum distance of a code with $(r, \ell)$-cooperative locality which encodes $k$ information symbols to $n$ symbols long codewords. 
As a special case of this result, we then obtain a bound on the   best possible rate for a code with $(r, \ell)$-cooperative locality with no further minimum
distance requirement. We address the issue of providing explicit constructions for codes with $(r, \ell)$-cooperative locality in Sections~\ref{sec:construction},~ \ref{sec:girth} and \ref{sec:expander}. 

In Section~\ref{sec:construction}, we present two simple constructions for the codes that have $(r, \ell)$-cooperative locality and comment on their rates with respect to the bound obtained in Section~\ref{sec:dmin_bound}. 
In Section~\ref{sec:girth}, we consider the codes based on regular bipartite graphs with large girth (girth = length of the smallest cycle). In particular, we show that a code based on regular bipartite graph with girth $g$ allows for cooperative local repair of $g-1$ failed code symbols. We further study cooperative locality of the codes based on expander graphs in Section~\ref{sec:expander}. We comment on the conditions in terms of expansion ratio or second eigenvalue that the underlying expander graph needs to satisfy for the code to enable cooperative repair of a certain number of erasures. Table~\ref{table:parameters} summarizes the rates and distances obtained by various code constructions considered in this paper.

Certain families of classical algebraic codes may possess local repair property. In Section~\ref{subsec:Hadamard}, we study punctured Hadamard codes (a.k.a. Simplex codes) in the context of cooperative local repair. We show that a punctured Hadamard code with $n$ symbols long codewords has $(r = \ell + 1, \ell)$-cooperative locality for any $\ell \leq \frac{n-1}{2}$. We conclude this paper in Section~\ref{sec:conclusion} with some directions for future work.

A short note on notation: we use bold lower case letters to denote vectors. For an integer $n \geq 1$, $[n]$ denotes the set $\{1, 2,\ldots, n\}$. For a code $\Cc$, we use ${\rm rate}(\Cc)$ and $d_{\min}(\Cc)$ to denote its rate and minimum distance, respectively.


\subsection{Related work}
The concept of codes with small locality for distributed storage system is introduced in~\cite{oggier_hom, Gopalan12, Papailiopoulos2012}. In \cite{Gopalan12}, Gopalan {\em et al.} study the rate vs. distance trade-off for linear codes with small locality or locally repairable codes\footnote{Throughout this paper, we use both ``codes with small locality'' and ``locally repairable codes'' to refer to the codes that enable local repair of a single failed code symbol.}. The similar trade-offs under more general definitions of locally repairable codes and constructions of the codes attaining these trade-offs are studied in \cite{PapDim12, PKLK12, RKSV12, KPLK12,cadambe_mazumdar, forbes, TamoBarg} and references therein.

In \cite{KumarTwo}, Prakash {\em et al.} consider  codes that allow for local repair of multiple code symbols. In particular, they focus on codes that can correct two erasures by utilizing two parity checks of weights at most $r + 1$. Prakash {\em et al.} derive the rate vs. distance trade-off for such code and (for large enough field size) show the existence of the codes that attain the trade-off. We note that the definition of cooperative locality considered in this paper is more general than that studied in \cite{KumarTwo}. Moreover, we do not restrict ourselves to only two erasures. In Section~\ref{subsec:kumar}, we show that the codes based on regular bipartite graphs with high girth are rate-wise (almost) optimal under the natural generalization of \cite{KumarTwo} to more than two erasures.

Recently, the codes that enables multiple ways to locally repair a code symbols have received attention. In \cite{LlHolOgg2013, TamoBarg, TamoBound}, the codes that enable multiple disjoint repair groups for every code symbol are considered. The codes that provide multiple disjoint repair group for only information symbols are studied in \cite{availability, WangZhang}. In Section~\ref{sec:disjoint_groups}, we comment on the implication of this line of work for the issue of cooperative locality.

\section{Codes with $(r, \ell)$-cooperative locality}
\label{sec:def}

\begin{definition}
\label{def:def1}
A $q$-ary code $\Cc$ with length $n$ and dimension $k\equiv \log_q |\Cc|$ is called an $(n, k)$ code. We define an $(n,k)$ code $\Cc$ to be a 
code with $(r, \ell)$-cooperative locality if for each $\Sc \subset [n]$ with $|\Sc| = \ell$, we have a  set $\Gamma_{\Sc} \subseteq [n]\backslash \Sc$ such that 
\begin{enumerate}
\item $|\Gamma_{\Sc}| \leq r$,
\item For any codeword $\cv = (c_1, c_2,\ldots, c_n) \in \Cc$, the $\ell$ code symbols $\cv_{\Sc} := \{c_i : i \in \Sc\}$ are functions of the code symbols $\cv_{\Gamma_{\Sc}}:= \{c_i : i \in \Gamma_{\Sc}\}$. 
\end{enumerate}
\end{definition} 

Note that Definition~\ref{def:def1} ensures that any $\ell$ code symbols can be {\em cooperatively} repaired from at most $r$ other code symbols. This generalizes the notion of codes with {\em all-symbol locality} $r$~\cite{Gopalan12, PapDim12, oggier_hom}, where locality is defined with respect to one code symbol, {\em i.e.}, $\ell = 1$. 

\begin{remark}
For a code $\Cc$ with all-symbol locality $r$, we have the following bound on its minimum distance~\cite{Gopalan12, PapDim12}. 
\begin{align}
\label{eq:dmin_gopalan}
d_{\min}(\Cc) \leq n - k - \ceilb{\frac{k}{r}} + 2.
\end{align}
The code attaining the bound in \eqref{eq:dmin_gopalan} are presented in \cite{PapDim12, RKSV12, KPLK12, TamoBarg} and references therein.
\end{remark}

\subsection{Cooperative locality from codes with multiple disjoint local repair groups for code symbols}
\label{sec:disjoint_groups}
In \cite{LlHolOgg2013, TamoBarg, TamoBound}, codes with multiple disjoint local repair groups for all code symbols are studied. These codes allow for multiple ways to recover a particular code symbol by contacting disjoint sets of small number of other code symbols. In particular, the work in  \cite{LlHolOgg2013, TamoBarg, TamoBound} study codes with at least $t$ disjoint local repair groups, each comprising of at most $\tilde{r}$ other code symbols. We claim, according to our definition, these codes also have $(\tilde{r}i, \ell = i)$-cooperative locality for each $i \in [t]$. Without  loss of generality, we establish this for $i = t$, {\em i.e.},  we argue that a code with $t$ disjoint repairs groups (each of size at most $\tilde{r}$) has $(\tilde{r}t, \ell = t)$-cooperative locality. 

Consider a set of $t$ code symbols in failure. For any of these $t$ failed code symbols, each symbol can have at least one failed code symbol in at most $t-1$ of its $t$ disjoint repair groups. This implies that the code symbol under consideration has at least one of its local repair groups free of any failures. Thus, the code symbol can be repaired with the help of one of its intact local repair groups. This leave us with $t-1$ code symbols in failure (erasure). Now, for another code symbol in failure, we can have at most $t-2$ of its disjoint local repair groups with at least one failed code symbol. This leaves at least $2$ of its disjoint local groups intact; therefore, this code symbol can be repaired with the help of one of its intact local repair groups. Following the similar argument, we can see that all of the $t$ failed code symbols can be repaired in a code with $t$ disjoint repair groups for all code symbols. In the worst case, we contact at most $\tilde{r}t$ code symbols to repair all of the $t$ failures. This establishes the $(\tilde{r}t, \ell = t)$-cooperative locality for the codes under consideration.

Similarly the codes with availability~\cite{WangZhang, availability}, which enable multiple disjoint repair groups only for information (systematic) symbols in a codeword, can allow for cooperative local repair for certain ranges of system parameters. In particular, \cite[Construction I]{availability} can give codes with $(\tilde{r}\ell, \ell)$-cooperative locality and rate $\frac{\tilde{r}}{\tilde{r} + \ell}$.

\begin{remark}
Here, we would like to note that the definition of the codes with $(r, \ell)$-cooperative locality (Definition~\ref{def:def1}) is more general. In particular, we show in  Section~\ref{subsec:partition}, Section~\ref{sec:concatenated} and Section~\ref{sec:girth} that it is possible to have codes with $(r, \ell)$-cooperative locality that do not have at least $t = \ell$ disjoint local repair groups for all code symbols (or information symbols).
\end{remark}

\subsection{Comparison with the codes with $(\tilde{r}, \delta)$-locality~\cite{RKSV12, KPLK12}}
\label{sec:rdelta}
In \cite{PKLK12}, Prakash {\em et al.} propose to study  codes with $(\tilde{r}, \delta)$-locality, a generalization that enforces additional requirements which the
 local repair groups of an LRC need to satisfy. In particular, a code $\Cc$ is said to have $(\tilde{r}, \delta)$-locality if there is a set of codes $\{\cC_i\}_{i \in \Lc}$ obtained by 
 puncturing the code $\Cc$, for some index set $\Lc$, such that the following three requirements hold: 1) For each $i \in \Lc$, the support of $\cC_i$ is no more than $\tilde{r}+\delta-1$, 2) for each $i \in \Lc$, the minimum distance of $\Cc_i$ is larger than or equal to $\delta$, and 3) each code symbol is contained in the support of at least one of the punctured codes $\cC_i$, $i \in \Lc$. The rate vs. distance trade-offs for the codes with $(\tilde{r}, \delta)$-locality and the constructions attaining these trade-offs are presented in \cite{RKSV12, KPLK12}.

Note that a code with $(r, \delta)$-locality ensures repair of any $\delta - 1$ failures within each punctured code. Here, we would like to highlight that the notion of $(r, \ell)$-cooperative locality is different from that of $(r, \delta)$-locality. In particular,  codes with $(r, \ell)$-cooperative locality are not required to meet the requirement 2) in the aforementioned definition of the codes with $(r, \delta)$-locality. As a result, there are families of codes which satisfy the requirements of $(r, \ell)$-cooperative locality, but that do not meet the definition of the codes with $(r, \delta)$-locality. We illustrate this with the help of the following example.

Let $\Cc$ be a code which encodes $3$ message symbols $\mv = (a, b, c)$ to a $7$ symbols long codeword $$\cv = (a, b, c, a+b, b+c, c + a, a+b+c).$$ We note that the code $\Cc$ is nothing but a $[7, 3, 4]$ Simplex code which we study in Sec.~\ref{subsec:Hadamard}. It follows from the analysis presented in Sec.~\ref{subsec:Hadamard} that this code has $(r = 3, \ell = 2)$-cooperative locality, {\em i.e.}, any set of $\ell = 2$ failed code symbols can be recovered by contacting $r = 3$ other code symbols. Let's assume that the code symbols $a$ and $a + b$ are in failure. In this case we can recover both the failed code symbols from the set of $r = 3$ code symbols $(b, b + c, a + b+ c)$. In other words $(a, a+b, b, b+c, a+b+c)$ form a punctured code of the original code $\Cc$ at $r + \ell = 5$ indices. However, this punctured code does not have minimum distance at least $\ell + 1 = \delta = 3$. This can easily be observed from the fact that the punctured sub-code does not allow the repair of $2$ code symbols $b + c$ and $a + b+ c$ from the remaining set of $3$ code symbols $(a, a+b, b)$. Moreover, there is no punctured codes of the original code at at most $r + \ell = 5$ indices which has minimum distance at least $\ell + 1 = 3$. Therefore, $\Cc$ is an example of a code with $(r = 3, \ell = 2)$-cooperative locality which does not have $(r = 3, \delta = \ell + 1= 3)$-locality as defined in \cite{PKLK12}. 

This also shows that the definition of $(r, \ell)$-cooperative locality is not a strengthening of the definition of $(r, \delta)$-locality. Hence, one cannot directly invoke the impossibility results for the codes with $(r, \delta)$-locality to obtain impossibility results for the codes with $(r, \ell)$-cooperative locality. However, as far as the achievability is concerned, a construction for a code with $(\tilde{r}, \delta)$-locality gives a construction with $(r = \ell\tilde{r}, \ell = \delta - 1)$-cooperative locality as explained in Sec.~\ref{sec:construction}.

\section{Rate vs. Distance Trade-off for Codes with $(r, \ell)$-cooperative locality}
\label{sec:dmin_bound}

In this section, for given $r$ and $\ell$, we present a trade-off between the rate and the minimum distance of a code with $(r, \ell)$-cooperative locality (cf. Definition~\ref{def:def1}). We employ the general proof technique introduced in \cite{Gopalan12, cadambe_mazumdar, forbes} to obtain the following result.

\begin{theorem}
\label{thm:bound1}
Let $\Cc \subseteq \FF_q^n$ be an $(n, k)$ code (linear, or non-linear) over the finite field $\FF_q$ with $(r, \ell)$-cooperative locality. Then, the minimum distance of $\Cc$ satisfies 
\begin{align}
\label{eq:bound1new}
d_{\min}(\Cc) \leq  n - k +1 - \ell\floorb{\frac{k -\ell}{r}}.
\end{align}
Furthermore, when we have $r \geq \ell$, the minimum distance of $\Cc$ satisfies the following.
\begin{align}
\label{eq:bound1}
d_{\min}(\Cc) \leq  n - k +1 - \ell\left(\ceilb{\frac{k}{r}} - 1\right).
\end{align}
\end{theorem}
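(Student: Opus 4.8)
The plan is to adapt the standard ``shortening/growing set'' argument of Gopalan \emph{et al.} \cite{Gopalan12} (as generalized in \cite{cadambe_mazumdar, forbes}) to the cooperative setting. Recall the core idea: build a set $T \subseteq [n]$ of coordinates such that the projection $\Cc|_{[n]\setminus T}$ still has dimension at least $k - |T| + (\text{number of independent relations captured})$; more precisely, one constructs $T$ so that $\dim \Cc|_{[n]\setminus T} \le k - 1$, which forces $d_{\min}(\Cc) \le |T|$. We construct $T$ greedily: start with $T = \emptyset$; as long as the restriction $\Cc|_{[n]\setminus T}$ still has dimension $\ge \ell$ (equivalently, as long as we have not yet ``killed'' enough dimensions), pick any $\ell$-subset $\Sc$ of $[n]\setminus T$ whose symbols are not already determined by the rest, take its cooperative repair set $\Gamma_{\Sc}$ (of size $\le r$), and adjoin $\Sc \cup \Gamma_{\Sc}$ to $T$. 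Each such step adds at most $r + \ell$ new coordinates to $T$ but, because the $\ell$ symbols in $\Sc$ are functions of $\cv_{\Gamma_{\Sc}} \subseteq \cv_{[n]\setminus\Sc}$, it reduces the dimension of the residual code by at least $\ell$ relative to what a generic coordinate-deletion would give. I would make this quantitative by tracking the quantity $\dim \Cc|_{[n]\setminus T}$ and showing it drops by at least $\ell$ per iteration while $|T|$ grows by at most $r + \ell$.

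The bookkeeping is the delicate part, and it is where the two different bounds \eqref{eq:bound1new} and \eqref{eq:bound1} come from. After $j$ iterations we have $|T| \le j(r+\ell)$ and $\dim \Cc|_{[n]\setminus T} \ge k - j\ell$ is \emph{not} quite right — rather one shows $\dim \Cc|_{[n]\setminus T} \le k - j\ell$ once the deleted coordinates are accounted for, or more carefully $n - |T| - \dim \Cc|_{[n]\setminus T} \ge j\ell$. We want to run the process until the residual dimension first drops to $\le k-1$; but we must be careful that at the last step we may not have a full $\ell$ ``fresh'' symbols available — this is exactly the source of the $k-\ell$ versus $k$ distinction. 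For the first bound \eqref{eq:bound1new}, run $j = \floorb{(k-\ell)/r}$ full iterations: this is guaranteed to be feasible because after $j$ iterations we've consumed at most $jr$ ``information dimensions'' via the $\Gamma$ sets and we need $jr \le k - \ell$ to still have $\ge \ell$ live symbols to pick the next $\Sc$. Then $|T| \le j\ell + $ (something from the $\Gamma$'s), and the Singleton-type bound $d_{\min} \le n - \dim\Cc|_{[n]\setminus T} \le n - (k - j\ell - jr) \cdots$ — I'd have to chase the constants, but the target $n - k + 1 - \ell\floorb{(k-\ell)/r}$ tells me the $\Gamma$-coordinates contribute $jr$ to the dimension-loss side and the $\Sc$-coordinates contribute $j\ell$ to the $|T|$ side, netting the claimed expression.

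For the improved bound \eqref{eq:bound1} under $r \ge \ell$, the point is that when $r \ge \ell$ one can afford to push one more iteration: after $\ceilb{k/r} - 1$ iterations the residual information dimension is at most $k - (\ceilb{k/r}-1)r$, which is at most $r$; since $r \ge \ell$... actually the gain is subtler — I expect it comes from being able to choose, in the final partial step, an $\Sc$ of full size $\ell$ because $r \ge \ell$ guarantees enough coordinates survive, upgrading the $\floorb{(k-\ell)/r}$ count to $\ceilb{k/r} - 1$. I would verify $\ell\big(\ceilb{k/r} - 1\big) \ge \ell\floorb{(k-\ell)/r}$ holds precisely when $r \ge \ell$, which sanity-checks that \eqref{eq:bound1} is the stronger statement in that regime.

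The main obstacle I anticipate is the \emph{feasibility of each greedy step}: one must guarantee that at every stage, as long as the residual code has dimension $\ge \ell$, there genuinely exists an $\ell$-subset $\Sc$ of the remaining coordinates that (a) is not already functionally determined by the other remaining coordinates, so that adjoining $\Sc \cup \Gamma_{\Sc}$ actually drops the residual dimension by the full $\ell$, and (b) has its repair set $\Gamma_{\Sc}$ still (mostly) outside $T$, or else we account correctly for the overlap. Handling the overlap of successive $\Gamma_{\Sc}$'s and $\Sc$'s with the already-built $T$ — and arguing that overlaps only help (they make $|T|$ smaller) while the dimension drop of $\ell$ per step is preserved — is the technical crux. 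I'd formalize (a) via a rank/dimension argument: if every $\ell$-subset of the survivors were determined by its complement, the residual code would have dimension $< \ell$, contradiction; this is a clean counting step but needs care in the non-linear case, where ``dimension'' must be read as $\log_q$ of projected code size and one argues via an entropy/cardinality inequality instead of linear algebra, following \cite{cadambe_mazumdar}.
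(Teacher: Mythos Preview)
Your overall strategy is exactly the paper's: iterate the Gopalan \emph{et al.}\ technique, at each step selecting $\ell$ not-yet-fixed coordinates $\Sc_j$ together with their cooperative repair set $\Rc_j = \Gamma_{\Sc_j}$; and your reading of how the two bounds split (the stopping condition, and $r \ge \ell$ upgrading $\lfloor (k-\ell)/r \rfloor$ to $\lceil k/r \rceil - 1$ via $tr \ge k - \ell \ge k - r$) is correct.

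The one real gap is that you frame the construction as \emph{puncturing} --- building $T$ and tracking $\dim \Cc|_{[n]\setminus T}$ --- and this does not close. Singleton on the punctured code gives an upper bound on $d_{\min}(\Cc|_{[n]\setminus T}) \le d_{\min}(\Cc)$, which is the wrong direction; and your alternate route ``$\dim \Cc|_{[n]\setminus T} < k \Rightarrow d_{\min}(\Cc) \le |T|$'' yields only $d_{\min} \le t(r+\ell)$, not the target $n - k + 1 - t\ell$. The paper instead \emph{shortens}: it passes to the sub-code $\Cc_j \subseteq \Cc_{j-1}$ in which the symbols at $\Rc_j$ take their most frequent value; cooperative locality then forces the $\ell$ symbols at $\Sc_j$ to be fixed as well, so if $a_j$ new coordinates get fixed at step $j$ one has $|\Cc_j| \ge |\Cc_{j-1}|/q^{\,a_j - \ell}$. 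After $t$ steps $\log_q|\Cc'| \ge k - |\Ic_t| + t\ell$, and now $d_{\min}(\Cc) \le d_{\min}(\Cc')$ goes the \emph{right} way; Singleton applied to $\Cc'$ punctured at the fixed set $\Ic_t$ gives $d_{\min}(\Cc') \le (n - |\Ic_t|) - \log_q|\Cc'| + 1 \le n - k - t\ell + 1$. Your feasibility worry is also simpler than you anticipate: as long as $|\Cc_{j-1}| > q^{\ell}$ there are more than $\ell$ non-constant coordinates, so $\Sc_j$ can always be chosen --- there is no need for the stronger condition ``not determined by its complement,'' which in fact fails for \emph{every} $\ell$-set by the very definition of cooperative locality.
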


\begin{figure}[htbp]
\algrule[1pt]
\textbf{Algorithm:} Construction of sub-code $\cC' \subset \cC$.
\algrule[1pt]
\begin{algorithmic}[1]
\REQUIRE $(n, k)$ code $\cC \subseteq \F_q^{n}$ with $(r, \ell)$-cooperative locality.
\STATE $\cC_0 = \cC$
\STATE $j = 0$  
\WHILE{$|\cC_{j}| > q^{\ell}$}
\STATE $j = j + 1$.
\STATE Choose $i^j_1,~i^j_2,\ldots, i^j_{\ell} \in [n]$ such that, for every $m \in [\ell]$, there exist at least two codewords in $\Cc_{j-1}$ that differ at $i^j_m$-th coordinate.
\STATE Let $\Rc_{j} = \Gamma_{\{i^{j}_1, \ldots, i^{j}_{\ell}\}}$ be the index of at most $r$ code symbols that cooperatively repair $\ell$ code symbols indexed by $\{i^{j}_1, \ldots, i^{j}_{\ell}\}$.
\STATE Let $\yv \in \F_q^{|\Rc_{j}|}$ be the most frequent element in the multi-set $\{\xv_{\Rc_j}: \xv \in \cC_{j-1} \subset \F_q^n\}$.
\STATE Define $\cC_{j} := \{\xv: \xv \in \cC_{j-1} \subset \F_q^n~\text{and}~\xv_{\Rc_j} = \yv\}$.
\IF{$1 < |\cC_{j}| \leq  q^{\ell}$}
\STATE \textbf{end while}
\ELSIF{$|\cC_{j}| = 1$}
\STATE Pick a maximal subset $\widetilde{\Rc}_j \subseteq \Rc_j$ such that $|\tilde{\cC_j}| > 1$, where $\widetilde{\cC}_{j} := \{\xv: \xv \in \cC_{j-1} \subset \F_q^n, \xv_{\widetilde{\Rc}_j} =\widetilde{\yv}_{j}\}$ and $\widetilde{\yv}_{\,j} \in \F_q^{|\widetilde{\Rc}_{j}|}$ be the most frequent element in the multi-set $\{\xv_{\widetilde{\Rc}_j}: \xv \in \cC_{j-1} \subset \F_q^n \}$.
\STATE $\cC_j = \widetilde{\cC}_j$. 
\STATE \textbf{end while}.
\ENDIF
\ENDWHILE
\ENSURE $\cC' = \cC_j$.
\end{algorithmic}
\algrule[1pt]
\caption{Construction of sub-code $\cC' \subset \cC$.}
\label{algo:subcode}
\end{figure}

\begin{proof}
The proof involves construction of a sub-code $\Cc' \subset \Cc \subseteq \FF_q^n$ such that all but a small number of coordinates in every codeword of $\Cc'$ are fixed. The coordinates of the codewords in $\Cc$ are fixed in an iterative manner as follows. In each iteration, we consider a set of $\ell$ coordinates which have not been fixed so far. Then we pick the set of $r$ other coordinates such that the code symbols associated with these $r$ coordinates allow us to repair the code symbols associated with the $\ell$ coordinates under consideration. The current iteration ends with fixing these $r + \ell$ coordinates to some specific values. Note that some of the $r$ coordinates may have been fixed in the previous iterations. We describe the iterative construction of the sub-code $\Cc'$ in Fig.~\ref{algo:subcode}. Given the sub-code $\mathcal{C}' \subset \mathcal{C}$, we have
\begin{align}
\label{eq:dmin1}
d_{\min}(\Cc) \leq  d_{\min}(\Cc').
\end{align}
Given $\Cc'$, one can obtain a code $\Cc''$ with $|\Cc''| = |\Cc'|$ by removing fixed coordinates from all the codeword in $\Cc'$. This implies that $d_{\min}(\Cc'') = d_{\min}(\Cc')$, which along with \eqref{eq:dmin1} give us the following. 
\begin{align}
\label{eq:dmin2}
d_{\min}(\Cc) \leq d_{\min}(\Cc'').
\end{align}
 We refer the reader to Appendix~\ref{appen:thm1} for the complete proof.
\end{proof}

\begin{remark}
It is possible to obtain a bound on the minimum distance of codes with $(r,\ell)$-cooperative locality that depends on the alphabet size, in the spirit of
\cite{cadambe_mazumdar}. Indeed, a more general version of Theorem \ref{thm:bound1} will give,
$$
k   \le \min_{t\le \min\{\lfloor\frac{n}{r+\ell}\rfloor,\lfloor\frac{k-1}{r}\rfloor\}} rt + \log_qA_q(n - t(r+\ell),d) ,
$$
where $A_q(n,d)$ is the maximum size of a $q$-ary error-correcting code of length $n$ and distance $d$. The proof of this bound is straight-forward.
\end{remark}

Note that an $(n, k)$ code with $(r, \ell)$-cooperative locality has its minimum distance at least $\ell + 1$ as it can recover from the erasure of any $\ell$ code symbols (cf. Definition~\ref{def:def1}). Combining this observation with Theorem~\ref{thm:bound1}, we obtain the following result. 
\begin{corollary}
\label{cor:rate1}
The rate of an $(n, k)$ code with $(r, \ell)$-cooperative locality is bounded as
\begin{align}
\label{eq:rate1new}
\frac{k}{n} \leq \frac{r}{r + \ell} +  \frac{1}{n}\frac{\ell^2}{r}.
\end{align}
Furthermore, for case when we have $r \geq \ell$, the rate of an $(n, k)$ code with $(r, \ell)$-cooperative locality satisfies
\begin{align}
\label{eq:rate1}
\frac{k}{n} \leq \frac{r}{r+ \ell}.
\end{align}
\end{corollary}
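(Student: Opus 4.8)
The plan is to derive Corollary~\ref{cor:rate1} directly from Theorem~\ref{thm:bound1} by invoking the elementary observation that a code with $(r,\ell)$-cooperative locality must have minimum distance at least $\ell+1$ (it recovers from any $\ell$ erasures by Definition~\ref{def:def1}). The hard part is already done — it lives in Theorem~\ref{thm:bound1} — so what remains is a short rearrangement. First I would substitute $d_{\min}(\Cc) \geq \ell + 1$ into the bound \eqref{eq:bound1new}, giving $\ell + 1 \leq n - k + 1 - \ell \floorb{\frac{k-\ell}{r}}$, i.e. $k + \ell\floorb{\frac{k-\ell}{r}} \leq n - \ell$. Then I would lower-bound the floor by $\floorb{\frac{k-\ell}{r}} \geq \frac{k-\ell}{r} - 1 + \frac{1}{r} \geq \frac{k - \ell}{r} - 1$; actually the cleanest route is to use $\floorb{\frac{k-\ell}{r}} \geq \frac{k}{r} - \frac{\ell}{r} - \frac{r-1}{r}$, but since we only want a rate bound of the stated form it suffices to drop to $\floorb{\frac{k-\ell}{r}} > \frac{k-\ell}{r} - 1$, whence $k + \ell\left(\frac{k-\ell}{r} - 1\right) < n - \ell$, which rearranges to $k\left(1 + \frac{\ell}{r}\right) < n + \frac{\ell^2}{r}$, i.e. $k \cdot \frac{r+\ell}{r} < n + \frac{\ell^2}{r}$, giving $\frac{k}{n} < \frac{r}{r+\ell} + \frac{1}{n}\frac{\ell^2}{r} \cdot \frac{r}{r+\ell} \leq \frac{r}{r+\ell} + \frac{1}{n}\frac{\ell^2}{r}$, which is \eqref{eq:rate1new}.

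For the second inequality \eqref{eq:rate1}, I would instead start from \eqref{eq:bound1}, which is available precisely in the regime $r \geq \ell$. Plugging in $d_{\min}(\Cc) \geq \ell + 1$ yields $\ell + 1 \leq n - k + 1 - \ell\left(\ceilb{\frac{k}{r}} - 1\right)$, i.e. $k + \ell\ceilb{\frac{k}{r}} \leq n$. Using $\ceilb{\frac{k}{r}} \geq \frac{k}{r}$ gives $k + \frac{\ell k}{r} \leq n$, hence $k \cdot \frac{r+\ell}{r} \leq n$, and therefore $\frac{k}{n} \leq \frac{r}{r+\ell}$, as claimed.

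The only subtlety worth flagging — and it is minor — is the bookkeeping on the floor and ceiling functions: one must be slightly careful that the crude bounds $\floorb{x} > x - 1$ and $\ceilb{x} \geq x$ are enough to reach exactly the stated forms, which they are, so there is no genuine obstacle here. I would present the two cases in sequence, citing Theorem~\ref{thm:bound1} for the respective distance bounds and the remark that $d_{\min} \geq \ell+1$ for the substitution, and close with the one-line algebraic rearrangement in each case.
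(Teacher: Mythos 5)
Your proof is correct and follows essentially the same route as the paper: substitute $d_{\min}(\Cc)\geq \ell+1$ into the respective distance bounds of Theorem~\ref{thm:bound1}, then bound the floor by $\floorb{x}\geq x-1$ and the ceiling by $\ceilb{x}\geq x$ and rearrange. Your extra observation that the first inequality actually yields the slightly tighter $\frac{r}{r+\ell}+\frac{1}{n}\frac{\ell^2}{r}\cdot\frac{r}{r+\ell}$ before being relaxed to the stated form is accurate but inconsequential.
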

\begin{proof}
It follows from \eqref{eq:bound1new} and the fact $d_{\min}(\Cc) \geq \ell + 1$ that
\begin{align}
\ell + 1 \leq d_{\min}(\Cc) \leq  n - k +1 - \ell\floorb{\frac{k -\ell}{r}}. \nonumber
\end{align}
By using $\floorb{\frac{k -\ell}{r}} \geq {\frac{k -\ell}{r}} - 1$, we get 
\begin{align}
\label{eq:rate_gen}
\frac{k}{n} \leq \frac{r}{r + \ell} + \frac{1}{n}\frac{\ell^2}{r}.
\end{align}
In the case when we have $r \geq \ell$, we can combine \eqref{eq:bound1} with the observation $d_{\min}(\Cc) \geq \ell + 1$ to obtain the following.
\begin{align}
\ell + 1 \leq d_{\min}(\Cc) \leq  n - k +1 - \ell\left(\ceilb{\frac{k}{r}} - 1\right). \nonumber
\end{align}
By using $\ceilb{\frac{k}{r}} - 1 \geq \frac{k}{r} - 1$, we get
\begin{align}
\label{eq:rate_spec}
\frac{k}{n} \leq \frac{r}{r + \ell}.
\end{align}
\end{proof}

\begin{remark}
Here, we note that the assumption $r \geq \ell$ is a natural assumption as it always holds for linear codes with $(r, \ell)$-coopreative locality and dimension at least $\ell$, i.e., $k \geq \ell$. 
Note that the additional term $\frac{1}{n}\frac{\ell^2}{r}$ that we have for the case when $r < \ell$ vanishes as $n$ becomes large as compared to $\ell$.
\end{remark}

\section{Naive constructions of codes with $(r, \ell)$-cooperative locality}
\label{sec:construction}
In this section we address the issue of constructing high rate codes that have $(r, \ell)$-cooperative locality. 
In particular, we describe two simple constructions that ensure cooperative local repair for the failure
of any $\ell$ code symbols: 1) Partition code and 2) Product code. In Partition code, we partition the 
information symbols in groups of $\frac{r}{\ell}$ symbol and encode each group with an 
$(\frac{r}{\ell} + \ell, \frac{r}{\ell})$-MDS code (cf. Section~\ref{subsec:partition}). On the other hand, a product code is obtained by arranging $k = \big(\frac{r}{\ell}\big)^{\ell}$ information symbols in an $\ell$-dimensional array and then introducing parity symbols along different dimensions of the array (cf. Section~\ref{subsec:product}). 

\subsection{Partition Code}
\label{subsec:partition}
For the ease of exposition, we assume that $\ell | r$
~and $\big(\frac{r}{\ell}\big) | k$. Given $k$ information symbol over $\F_q$, a Partition code encodes these symbols into $n = k\frac{r + \ell^2}{r}$ symbols long codewords as follows: 
\begin{enumerate}
\item Partition $k$ information symbols into $p = \frac{k\ell}{r}$ groups of size $\frac{r}{\ell}$ each. 
\item Encode the symbols in each of the $p$ groups using an $(\frac{r}{\ell} + \ell, \frac{r}{\ell})$-MDS code over $\F_q$. We refer to the $\frac{r}{\ell} + \ell$ code symbols obtained by encoding $\frac{r}{\ell}$ information symbols in the $i$-th group as $i$-th local group.
\end{enumerate} 

As it is clear from the construction, Partition code has rate $\frac{k}{n} = \frac{r}{r + \ell^2}$. Moreover, a code symbol  can be recovered from any $\frac{r}{\ell}$ other code symbols from its local group. In the worst case, when $\ell$ failed code symbols belong to $\ell$ distinct local groups, we can recover all $\ell$ symbols from $\ell\frac{r}{\ell} = r$ code symbols, downloading $\frac{r}{\ell}$ symbols from each of the $\ell$ local groups containing one failed code symbol.

\begin{remark}
Note that the Partition codes presented here are special cases of codes with $(\frac{r}{\ell}, \delta = \ell + 1)$-locality as studied in \cite{KPLK12, RKSV12} (cf. Sec.~\ref{sec:rdelta}). The partition codes as described above only aim at maximizing the rate of the code. If we are also interested in achieving large minimum distance, then we can take $n$ strictly greater than $k\frac{r + \ell^2}{r}$ and attain the following relationship between the minimum distance $d_{\min}$ and the code dimension $k$~\cite{RKSV12}
\begin{align}
d_{\min}(\Cc) = n - k + 1 - \ell\left(\frac{k\ell}{r} - 1\right).
\end{align}
\end{remark}

In the above construction of the Partition codes we use an $(\frac{r}{\ell} + \ell, \frac{r}{\ell})$ MDS code to encode disjoint groups of $\frac{r}{\ell}$ message symbols. Note that the rate of this MDS code governs the rate of the overall code. One can potentially use some other code $\Cc^{\rm local}$ of minimum distance at least $\ell + 1$ to encode disjoint groups of $\frac{r}{\ell}$ message symbols. Now, we use $\mathpzc{r}(x)$, $x \in [\ell]$ to denote the number of symbols that needs to be contacted to repair $x$ erasure in one local group. For the case when an $(\frac{r}{\ell} + \ell, \frac{r}{\ell})$ MDS code is used, we have $\mathpzc{r}(x) = \frac{r}{\ell}$ for $x \in [\ell]$. Let $\mathpzc{r}^{\ast}(x)$ denote the {\em upper concave envelope} of $\mathpzc{r}(x)$ on the interval $[1, \ell] \in \mathbb{R}$. Assume that we have $p$ disjoint local groups, then a pattern of $\ell$ erasures can be represented by a vector $(l_1, l_2,\ldots, l_p)$. Here, $l_i$ denotes the number of erasures within the $i$-th local group. Note that we have $\sum_{i = 1}^{p}l_i = \ell$.

For a given local code $\Cc^{\rm local}$, one needs to access $\sum_{i = 1}^{p}\mathpzc{r}(l_i)$ number of intact code symbols to repair the erasure pattern $(l_1, l_2,\ldots, l_p)$. Now, we use concavity of $\mathpzc{r}^{\ast}(\cdot)$, the fact that $\mathpzc{r}^{\ast}(x) \geq \mathpzc{r}(x)$ for $x \in [\ell]$, and Jensen's inequality to obtain the following.
\begin{align}
\label{eq:jensen}
\sum_{i = 1}^{p}\mathpzc{r}(l_i) \leq \sum_{i = 1}^{p}\mathpzc{r}^{\ast}(l_i) & \leq p\mathpzc{r}^{\ast}\left(\frac{\sum_{i = 1}^{p}l_i}{p}\right)  = p\mathpzc{r}^{\ast}\left(\frac{\ell}{p}\right).
\end{align}

Since the rate of the Partition code is agnostic to the number of local groups, we can use the value of $p$ which can support $k$ message symbols and minimizes the R.H.S. of \eqref{eq:jensen}. This approach optimizes the value of $r$ for a given choice of $\ell$ and $\Cc^{\rm local}$.

\begin{example}
It is possible to achieve better locality parameters in Partition code than just to use copies of MDS codes.
Consider a Partition code with two blocks, each being a punctured Hadamard [7,3,4] code. From Theorem \ref{lem:hadamard}, we know that $r(x) = r^\ast(x) =x+1$, for $1\le x \le 3$ for these Hadamard codes. Hence, we have an $(14,6)$ code with $(5,3)$-cooperative locality.

On the other hand, consider a Partition code with two blocks of $(7,3)$-MDS codes. For this $(14,6)$ code, we may need to access up to $6$ symbols to repair even two symbols. Indeed, the overall code has $(6,3)$-cooperative locality.  
\end{example}

\subsection{Product Code}
\label{subsec:product}
Product codes are a well known construction  of codes in the coding theory literature. Given $k = \big(\frac{r}{\ell}\big)^{\ell}$ information symbols and $\ell|r$, we first arrange 
$k = \big(\frac{r}{\ell}\big)^{\ell}$ information symbols in an $\ell$-dimensional array with index of each dimension of the array ranging in the set $[\frac{r}{\ell}]$. These information symbols are then encoded to obtain an $n = \big(\frac{r}{\ell} + 1\big)^{\ell}$ symbols long code word. In the following we describe the encoding process for $\ell = 2$-dimensional array.  
The generalization of the encoding process for higher dimensions is straightforward.
\begin{enumerate}
\item Arrange $k = \big(\frac{r}{2}\big)^{2}$ information symbols in an $\frac{r}{2} \times \frac{r}{2}$ array.
\item For each row of the array, add a parity symbol by summing all $\frac{r}{2}$ symbols in the row and append these symbols to their respective rows.  
\item For each of the $\frac{r}{2} + 1$ columns of the updated array, add a parity by summing all $\frac{r}{2}$ symbols in the column.
\end{enumerate}

\begin{remark}
An $\ell$-dimensional product code enables $\ell$ disjoint repair groups for all code symbols. For example, every code symbol in a $2$-dimensional product code has two disjoint repair groups, associated with its row and column, respectively. Therefore, cooperative locality of product codes follows from the discussion in Section~\ref{sec:disjoint_groups}. We note that product codes along with their minimum distance have been previously been considered in \cite{TamoBarg, WangZhang} in the context of codes with small locality.
\end{remark}

We now compare the rate of Partition code and Product code with the bound in \eqref{eq:rate1}. For any $\ell \geq 1$, we have 
\begin{align}
\label{eq:prod_partition}
\left(\frac{r}{r + \ell}\right)^{\ell} \leq \frac{r}{r + \ell^2}.
\end{align}
Note that \eqref{eq:prod_partition} follows from the fact that $$r^{\ell}(r + \ell^2) \leq r(r^{\ell} + \ell r^{\ell - 1}\ell) + r\left(\sum_{i = 2}^{\ell}{\ell \choose i}r^{\ell-i}\ell^{i}\right) = r(r + \ell)^{\ell}.$$
Therefore, Partition code approach provides $(r, \ell)$-cooperative locality with a better rate. However, for all system parameters, the rate of Partition code is smaller than the known bound \eqref{eq:rate1}, {\em i.e.},
$$
\frac{r}{r + \ell^2} \leq \frac{r}{r + \ell}.
$$
Here, we would like to note that the difference between the rate achieved by the Partition code and the bound in \eqref{eq:rate1} gets smaller as the parameter $r$ becomes large as compared to the parameter $\ell$. It is an interesting problem to either tighten the bound in \eqref{eq:rate1} or present a construction for codes with $(r, \ell)$-cooperative locality which have higher rate than that of Partition code. In the next two sections we present two approaches to achieve this goal.

\section{Concatenated Codes with $(r, \ell)$-cooperative locality}
\label{sec:concatenated}

Here, we describe a family of concatenated codes with $(r, \ell)$-cooperative locality. This construction employs an MDS code and a code with small locality as inner and outer codes, respectively. In particular, we employ an $[\frac{r}{\ell} + x, \frac{r}{\ell}, x + 1]$ MDS code over $\F_q$ and an $[n_{\rm out}, k_{\rm out}]$ code with $(r_{\rm out}, \ell_{\rm out})$-cooperative locality over $\F_{q^{r/\ell}}$ as inner and outer codes, respectively. Let $\Cc$ be the concatenated code. We know that 
\begin{align}
\label{eq:rate_c1}
R = {\rm  rate}(\Cc) &= \frac{r}{r + x\ell}\cdot\frac{k_{\rm out}}{n_{\rm out}} 
\end{align}

Before we describe the concatenated codes with $(r, \ell)$-cooperative locality for general $\ell$, let's consider a few examples for small values of $\ell$.

\subsection{When $\ell = 3$}
\label{sec:ell_3}

Let us take an $[\frac{r}{3} + 1, \frac{r}{3}, 2]$ MDS code over $\F_q$ as the inner code. This code can repair any one failed code symbol by contacting the remaining $\frac{r}{3}$ code symbols. For outer code, we employ a code with $(r_{\rm out}, 1)$-cooperative locality over $\F_{q^{r/3}}$. This can repair any one super symbol (which consists of $\frac{r}{3}$ symbols of $\F_q$) by contacting $r_{\rm out}$ symbols over $\F_{q^{r/3}}$, {\em i.e.}, $r_{\rm out}\frac{r}{3}$ symbols over $\F_q$. (Note that in order to repair a super symbol, we can obtain the value of $r_{\rm out}$ required super symbols by contacting $\frac{r}{3}$ symbols over $\F_q$ from each of their corresponding codewords of the inner code.)

If $\ell = 3$ erasures lie in the inner codewords of $3$ different super symbols, then we can repair each of these erasures by contacting $\frac{r}{3}$ other code symbols. This amounts to using $3\frac{r}{3} = r$ symbols over $\F_q$. If at least $2$ erasures belong to the inner codeword of a super symbol, we can employ $(r_{\rm out}, 1)$-cooperative locality of the outer code to repair the corresponding super symbol. In the worst case, we contact $r_{\rm out}\frac{r}{3} + \frac{r}{3}$ symbols over $\F_q$, when $2$ erasures belong to one super symbol and the third erasure belongs to another super symbol. Since we want 
$$
r_{\rm out}\frac{r}{3} + \frac{r}{3} \leq r,
$$
we have $r_{\rm out} \leq 2$. Taking $r_{\rm out} = 2$, we can get the concatenated code with rate 
$$
R = \frac{r}{r + 3}\cdot\frac{r_{\rm out}}{r_{\rm out} + 1} = \frac{2r}{3(r + 3)}.
$$
Moreover, this code has minimum distance at least $4$. Now, we compare the rate of the obtained concatenated code with that of the Partition code described in Section~\ref{subsec:partition}, which has rate $\frac{r}{r + 9}$.
\begin{align}
&~\frac{2r}{3(r + 3)} > \frac{r}{r + 9} \Rightarrow~~r < 9.
\end{align}
Hence, for all $r <9$, the concatenated codes have a higher rate than the Partition codes.

\subsection{When $\ell = 4$}

Here, we focus on obtaining the codes with $(r, 4)$-cooperative locality. We use an  $[\frac{r}{4} + 2, \frac{r}{4}, 3]$ MDS code over $\F_q$ as the inner code. This code can correct $2$ erasures within an inner codeword associated with a super symbol. In order to repair a  super symbol, we employ a code with $(r_{\rm out}, 1)$-cooperative locality over $\F_{q^{r/4}}$ as an outer code. It can be easily verified that (for suitable value of $r_{\rm out}$) the concatenated code obtained by this approach allows for the recovery of $4$ erasures by contacting at most $r$ symbols over $\F_q$. In particular, when the inner codeword associated with one super symbol encounter $3$ erasures and the inner codeword associated with another super symbol encounter $1$ erasure, we contact at most $r_{\rm out}\frac{r}{4} + \frac{r}{4}$ symbols over $\F_q$. Since we need to satisfy
$$
r_{\rm out}\frac{r}{4} + \frac{r}{4} \leq r,
$$
we have $r_{\rm out} \leq 3$. Working with $r_{\rm out} = 3$, one can obtain a code with $(r, 4)$-cooperative locality and rate 
$$
R = \frac{r/4}{r/4 + 2}\cdot\frac{r_{\rm out}}{r_{\rm out} + 1} = \frac{3r}{4(r + 8)}.
$$
Moreover, the concatenated codes obtained in this manner have minimum distance at leat $2\times 3 = 6$. We obtain better rate as compared to that of the Partition codes (cf. Section~\ref{subsec:partition}), iff
\begin{align}
&~\frac{3r}{4(r + 8)} > \frac{r}{r + 16} \Rightarrow~~r < 16.
\end{align}

\subsection{General values of $\ell$}

Here,  in addition to $\ell | r$, we assume that $\ell$ is even\footnote{This assumption is just for the ease of exposition and a similar construction for odd $\ell$ can also be proposed.}. For $1 \leq x \leq \ell - 1$, We now take an $[\frac{r}{\ell} + x, \frac{r}{\ell}, x + 1]$ MDS code over $\F_q$ as the inner code. For outer code, we employ a code over $\F_q^{r/\ell}$ that can locally recover $\floorb{\frac{\ell}{x + 1}}$ failed (erased) super symbols. Note that there exist such codes with rate~(cf. Section~\ref{subsec:partition} and Section~\ref{sec:disjoint_groups}) $$\frac{\tilde{r}}{\tilde{r} + \floorb{\frac{\ell}{x + 1}}},$$ where $\tilde{r}$ denotes the number of super symbols needed for the local repair of $1$ super symbol. In our definition, these codes have $(i\tilde{r}, i)$-cooperative locality for all $i \in \{1, 2,\ldots, \floorb{\frac{\ell}{x + 1}}\}$. 

Now, consider the case where all $\ell$ erasures lie in the inner codewords corresponding to different super symbols, we can repair all $\ell$ erasures by contacting $\ell \times \frac{r}{\ell} = r$ code symbols over $\F_q$. For the case where $1\leq y \leq \floorb{\frac{\ell}{x + 1}}$ super symbols are in erasure, in the worst case, we have $(x + 1)$ erasures in the inner codewords corresponding to $y$ distinct super symbols and $1$ erasure in the inner codewords associated with $\ell - y(x+1)$ different super symbols. In order to repair these erasures, we contact $$y\tilde{r}\frac{r}{\ell} + (\ell - y(x+1))\frac{r}{\ell}$$ symbols over $\F_q$. Since we need to satisfy, 
$$
y\tilde{r}\frac{r}{\ell} + (\ell - y(x+1))\frac{r}{\ell} \leq r, 
$$ 
we get $\tilde{r} \leq x + 1$. Therefore, the rate of the concatenated code we get is 
\begin{align}
\label{eq:rate_x}
R(x) &= \frac{x + 1}{x + 1 + \floorb{\frac{\ell}{x+1}}}\cdot \frac{\frac{r}{\ell}}{\frac{r}{\ell} + x} \nonumber \\
&= \frac{x + 1}{x + 1 + \floorb{\frac{\ell}{x+1}}}\cdot \frac{r}{r + x\ell}.
\end{align}
Note that the concatenated code obtained in this section has minimum distance at least $(x+1)(\floorb{\frac{\ell}{x+1}} + 1)$.

\begin{remark}
If we substitute $x = \frac{\ell}{2}$ in \eqref{eq:rate_x}, we obtain a code with rate  
\begin{align}
\label{eq:rate_x1}
R\left(\frac{\ell}{2}\right) &= \frac{\frac{\ell}{2} + 1}{\frac{\ell}{2} + 1 + 1}\cdot \frac{r}{r + 2} = \frac{\ell + 2}{\ell + 4}\cdot \frac{r}{r + 2}.
\end{align}
The  rate $R\left(\frac{\ell}{2}\right) $ in \eqref{eq:rate_x1} is strictly greater than the rate of the partition codes $\frac{r}{r + \ell^2}$ as long as $r < \frac{\ell^3}{4}$.
\end{remark}

\section{Cooperative locally repairable codes using codes on graphs}

The concatenated codes described in Section~\ref{sec:concatenated} enable $(r, \ell)$-cooperative locality with better rate and minimum distance as compared to those of Partition codes. However, the improvements obtained by concatenated code approach are small and limited to the bounded values of the parameter $r$. In this section, we present various graphs based codes that improve upon the previously described approaches for a large range of system parameters.

\subsection{Bipartite graphs with large girth}
\label{sec:girth}

The {\em girth} of a graph is the  number of vertices in the shortest cycle of the graph.
In this section, we explore a particular class of codes based on bipartite graphs with high girth. In this construction, the code symbols are associated with the edges of a bipartite graph and both the left and right vertices in the the bipartite graph enforces the constraints on the code symbols associated with the edges incident on these vertices. The analysis of the cooperative locality of the codes obtained in this manner is based on the fact that the underlying bipartite graph has high girth. 

Let $\Gc = (\Uc \cup \Vc, \Ec)$ be a bipartite graph where $\Uc$ and $\Vc$ denote the set of left and right vertices, respectively. In particular, we work with the bipartite graphs that are bi-regular, {\em i.e.}, all the vertices from one part have the same degree. If all the left vertices and right vertices have degree $\Delta_1$ and $\Delta_2$, respectively, then we refer to such a bipartite graph as a $(\Delta_1, \Delta_2)$-regular bipartite graph. In the case, where we have $\Delta_1 = \Delta_2 =  \Delta$, we simply call the bipartite graph as $\Delta$-regular bipartite graph. Given the bipartite graph $\Gc$, we obtain a code  $\Cc$ (over $\F_q$) in the following manner:
\begin{itemize}
\item We assign each edge in the bipartite graph $\Gc(\cV,\cE)$ with a code symbol in the codewords of $\Cc$. That is, $\cC \subseteq \ff_q^{|\cE|}$.
\item For every (left or right) vertex in the bipartite graph, all the code symbols associated with the edges incident on the vertex satisfy a linear constraint (over $\F_q$).
\end{itemize}

Before stating our general result on the cooperative locality of the codes obtained in this manner, we consider small values of $\ell$. Note that any two edges in $\Gc$ (code symbols in a codeword of $\Cc$) share at most one vertex (appear together in at most one local constraint). Thus, for $\ell = 2$ code symbols in erasure, it is possible to find two local constraints that contain exactly one of the two erased symbols. This allows for the repair of both the erased symbols by utilizing these two local constraints. In other words, the code $\Cc$ has $(2(\Delta_{\rm max} - 1), 2)$-cooperative locality, where $\Delta_{\rm max}$ denotes the maximum degree of the underlying bipartite graph $\Gc$. Similarly, even in the presence of $\ell = 3$ erasures, one can find at least two local constraints such that there is only one erasure among the code symbols participating in each of these constraints. Fig.~\ref{fig:3erasures1}~and~\ref{fig:3erasures2} illustrate this fact by considering two possible patterns of $\ell = 3$ erasures. Now, using these two constraints one can repair two erasures, which leaves only one erased symbol which can then be recovered with the help of any of the two local constraints it appears in. The repair of $\ell = 3$ erasures involves at most $3(\Delta_{\rm max} - 1)$ other code symbols; hence, the code $\Cc$ has $(3(\Delta_{\rm max} - 1), 3)$-cooperative locality. In order to cooperatively repair $\ell > 3$ erasures in $\Cc$, we utilize the fact that the underlying bipartite graph $\Gc$ has high girth.

\begin{figure*}[t!]
        \centering
        \begin{subfigure}[b]{0.40\textwidth}
	\centering
		\includegraphics[width=0.7\textwidth]{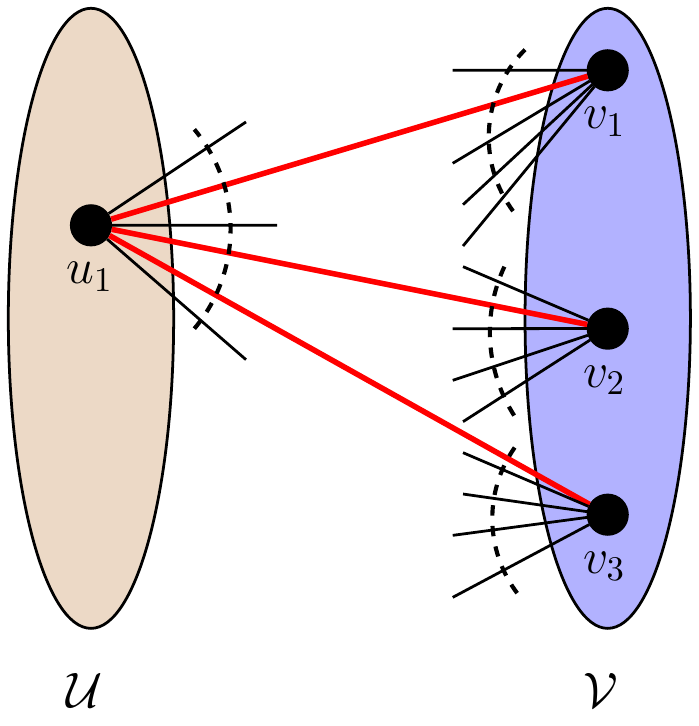}
                      \caption{In this pattern all three erased code symbols appear together in a local constraint associated with the left vertex $u_1$. Therefore, all the erased symbols must appear in different local constraints corresponding to the right vertices. This allows for the recovery of all three erased code symbols by using local constraints associated with the right vertices $v_1, v_2,$ and $v_3$.}
                       \label{fig:3erasures1}
        \end{subfigure}%
       \quad
        \begin{subfigure}[b]{0.4\textwidth}
	\centering
		\includegraphics[width=0.7\textwidth]{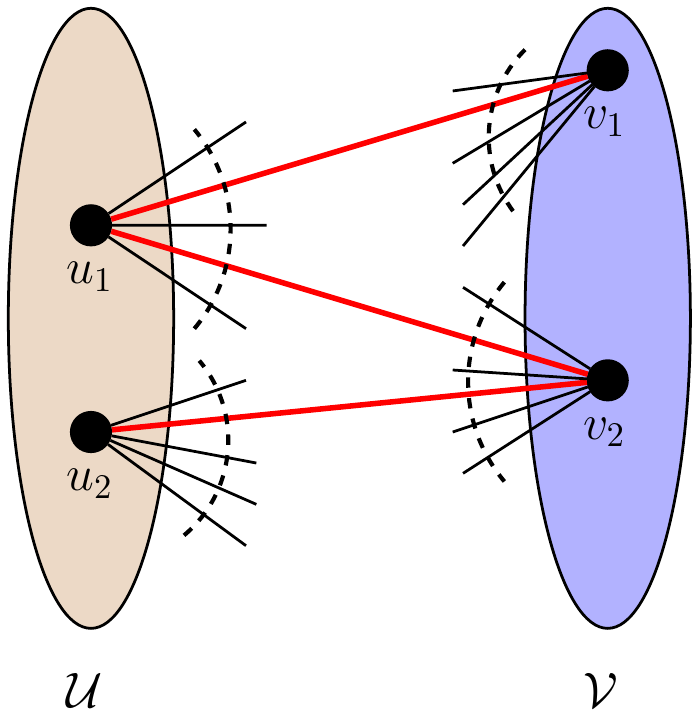}
                      \caption{Here, two erased code symbols appear together in the local constraint associated with the left vertex $u_1$. Thus, both the erased symbols appearing in two different local constraints defined by the right vertices $v_1$ and $v_2$. Since the third erased symbol can participate in only one of the local constraints associated with the right vertices ($v_2$ here), one can recover all three erased symbols.}
                      \label{fig:3erasures2}
        \end{subfigure}
        \caption{Two representative patterns of $3$ erasures. The edges associated with the three \newline erasures are colored in red.}
\end{figure*}

\begin{theorem}
\label{lem:girth_repair}
Let $\Gc$ be a $\Delta$-regular bipartite graph with girth $g$, then the code $\Cc$ obtained from the construction described above has $((g-1)(\Delta -1), g - 1)$-cooperative locality.
\end{theorem}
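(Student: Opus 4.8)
The plan is to argue that if $\Gc$ is a $\Delta$-regular bipartite graph with girth $g$, then any set $E_0$ of at most $g-1$ edges (erased code symbols) can be repaired by contacting at most $(g-1)(\Delta-1)$ other edges. The strategy is to peel off the erasures one at a time: I will show that as long as there remains at least one unrepaired erased edge, there is always a vertex incident to exactly one unrepaired erased edge, so the local constraint at that vertex lets me solve for that edge. Each such repair contacts at most $\Delta-1$ other edges incident to that vertex, and there are at most $g-1$ repairs, giving the bound $(g-1)(\Delta-1)$. (Some contacted edges may coincide or may themselves have been previously repaired, so this count is a worst-case overestimate, which is fine.)

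The heart of the argument is the peeling lemma: if $F \subseteq E_0$ is a nonempty set of edges with $|F| \le g-1$, then some vertex of $\Gc$ is incident to exactly one edge of $F$. Suppose not: every vertex touched by $F$ is incident to either zero or at least two edges of $F$. Consider the subgraph $H$ of $\Gc$ consisting of the edges in $F$ together with their endpoints. Then $H$ has minimum degree at least $2$, so $H$ contains a cycle. But every cycle in $\Gc$ has length at least $g$ (the girth), hence uses at least $g$ edges, so $|F| \ge |E(H)| \ge g$, contradicting $|F| \le g-1$. Therefore such a vertex exists. I would also want to note that this vertex $w$ is incident to at most $\Delta - 1$ edges outside $F$ (indeed outside $E_0$, in the relevant sense), since $\deg(w) = \Delta$ and one incident edge lies in $F$; the linear constraint at $w$ then expresses that one erased symbol as a function of the at most $\Delta - 1$ other symbols on edges at $w$. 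The two worked cases $\ell = 2, 3$ illustrated before the theorem are exactly the base instances of this peeling procedure.

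Running the peeling to completion: start with $F = E_0$, $|E_0| \le g-1$. Repeatedly pick a vertex $w$ incident to exactly one edge $e \in F$, recover $e$ via the constraint at $w$ (contacting $\le \Delta-1$ edges), and set $F \leftarrow F \setminus \{e\}$. The peeling lemma applies at every step since $|F|$ only decreases and stays $\le g-1$, so the process terminates with $F = \emptyset$, i.e., all of $E_0$ recovered. The total number of contacted (non-erased) symbols over all steps is at most $|E_0| \cdot (\Delta - 1) \le (g-1)(\Delta-1)$. This establishes $((g-1)(\Delta-1), g-1)$-cooperative locality. Strictly, Definition~\ref{def:def1} asks this for sets $\Sc$ of size exactly $g-1$; for a set of fewer erasures one simply pads $\Sc$ arbitrarily, or observes the same argument works verbatim for $|\Sc| \le g-1$.

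The main obstacle I anticipate is purely in the bookkeeping for the size bound rather than in the recovery logic: when a previously repaired edge $e'$ reappears among the $\Delta-1$ edges contacted to repair a later edge $e$, one should be careful not to double-count, and one must make sure the repair set $\Gamma_{\Sc}$ is genuinely disjoint from $\Sc$ as required by Definition~\ref{def:def1} — this holds because at each step we only ever contact edges incident to $w$ other than the unique erased edge at $w$, and the peeling order guarantees we never need an edge of $\Sc$ that has not yet been recovered. A secondary point worth stating carefully is that the linear constraint at $w$ does allow solving for a single participating symbol in terms of the others; this is immediate for a parity-check-type constraint over $\F_q$ provided the coefficient on edge $e$ is nonzero, which we may assume the construction guarantees (e.g., all coefficients equal to $1$, as in the simplest instantiation).
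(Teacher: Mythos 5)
Your proof is correct and follows essentially the same strategy as the paper's: repeatedly locate a vertex incident to exactly one erased edge, use its local constraint to recover that edge, and bound the total contacted symbols by the number of repair steps times $\Delta-1$. The only presentational difference lies in how you establish the key ``peeling'' step. You state it as a clean standalone lemma and invoke the standard fact that a graph of minimum degree at least $2$ contains a cycle (hence a cycle of length at least the girth, forcing $|F|\ge g$). The paper reaches the identical conclusion by an explicit walk argument: starting from a vertex with at least two erased edges and traversing along erased edges to unexplored vertices, concluding that the walk must either exhaust the erasures (yielding a degree-one vertex, contradiction) or revisit a vertex (yielding a cycle of length $\le \ell+1 \le g-1$, contradiction). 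Your version is a bit more economical and avoids the induction on $\ell$ that the paper wraps around the argument, but the mathematical content is the same. Your attention to the disjointness of $\Gamma_{\Sc}$ from $\Sc$ and to the nonvanishing-coefficient assumption on the local constraints are both points the paper leaves implicit, so flagging them is reasonable hygiene rather than a deviation.
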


\begin{proof}
A bipartite graph can only have cycles of even length (number of vertices or edges).
Note that as we explain this before stating this theorem, the code $\Cc$ can correct up to $3$ erasures without any assumption on the girth of the bipartite graph $\Gc$. Therefore, without loss of generality we can assume that the girth of the bipartite graph $\Gc$ is at least six\footnote{Note that Theorem~\ref{lem:girth_repair} guarantees cooperative local repair of only $\ell = 3$ erasures when $g = 4$.}, {\em i.e.}, $g \geq 6$. We use induction over the number of erasures to prove the claim. For the base case we consider the case of $\ell = 3$ erasures. As described in the paragraph preceding the statement of this theorem, the code $\Cc$ can recover from $3$ erasures in a cooperative manner.  

Now as an inductive hypothesis, we assume that the code $\Cc$ can repair at most $\ell  \leq g - 2$ erasures in a cooperative manner and show that it is also possible to repair $\ell + 1 \leq g - 1$ erasures. Towards this, we show that given $\ell + 1$ erasures, it is possible to obtain a local constraint which has a single erasure among the code symbols appearing in the constraint. Finding such a constraint allow for the recovery of one erasure leaving $\ell$ erasures. In order to show a contradiction, we assume that no such local constraint exists. We start with a vertex say $u_1 \in \Uc$ with at least $2$ of the code symbols associated with the edges incident on it in erasure. We then traverse along one of the edges out of the vertex $u_1$ which have their corresponding code symbols in erasure. (Note that there are at least $2$ of such edges.) Let $v_1 \in \Vc$ denote the vertex that we arrive at after traversing the edge. Since $v_1$ has at least $2$ code symbols associated with its edges in erasure, we can now pick an edge associated with one of the erased symbol to reach another vertex $u_2 \in \Uc$ which is different from $u_1$. We continue this process until we can not traverse to an unexplored vertex through an edge with its associated symbol in erasure. Note that this process is bound to end in at most $\ell + 1$ steps as there are only $\ell + 1$ erasures.  This process can end with two possibilities: 1) we have traversed through all edges associated with erased symbols or 2) all the unexplored edges from the last vertex leads to previously visited vertices. The first possibility is not feasible under our assumption as it implies that the last vertex has only single erasure associated with the edges incident on it. The second possibility leads to the existence of cycle of length at most $\ell + 1$ which is infeasible as $\ell + 1 \leq g -1$. This leads to a contradiction. Thus, it is possible to obtain a local constraint which has a single erasure among the code symbols appearing in the constraint. Now that we are remained with $\ell$ erasures, we can employ the inductive hypothesis to complete the proof.

As for the total number of intact code symbols contacted during the repair process, in the worst case, we may need to utilize $g-1$ different local constraints to recover from $g-1$ erasures. This amounts to contacting $(g-1)(\Delta - 1)$ intact code symbols. 
\end{proof}

\begin{remark}{\em (Construction of regular bipartite graphs with large girth)}
\label{rem:construction_girth}
The problem of constructing regular bipartite graphs with large girth has received significant attention in the past. Here, we like to point out the work presented in \cite{LazebnikD, LazebnikC} and references therein. For an odd integer $k \geq 1$ and prime power $q$, Lazebnik {\em et al.} present explicit construction for $q$-regular bipartite graphs with girth at least $k + 5$ and number of edges $q^{k-1}$~\cite{LazebnikC}. Therefore, for any $\ell$, one can design a code using a regular bipartite graph from \cite{LazebnikC} which ensures cooperative local repair of any $\ell$ erasures.
\end{remark}

\subsubsection{Rate and distance of $\Cc$ obtained from a regular bipartite graph}
\label{subsec:rate_girth}

When $\Gc$ is a regular bipartite graph of degree $\Delta$, the number of independent linear constrains on the codewords is at most $\frac{2|\cE|}{\Delta}$. Hence the rate of the code is $${\rm rate}(\Cc) \geq \frac{|\Ec| - 2|\Ec|/\Delta}{|\Ec|}  = \frac{\Delta-2}{\Delta}.$$

Note that Theorem~\ref{lem:girth_repair} establish that the code $\Cc$ obtained using a $\Delta$-regular graph with girth $g$ has $((g-1)(\Delta - 1), g - 1)$-cooperative locality. If we set $(g-1)(\Delta - 1) = r$ and $g-1 = \ell$, then the following holds for the code $\Cc$ with $(r, \ell)$-cooperative locality. 
\begin{align}
\label{eq:rate_girth}
{\rm rate}(\Cc) \geq \frac{\frac{r}{\ell} - 1}{\frac{r}{\ell} + 1} = \frac{r - \ell}{r + \ell}.
\end{align} 
As far as the minimum distance $d_{\min}(\Cc)$ of a code $\Cc$ based on a $\Delta$-regular bipartite graph $\Gc$ with girth $g$ is concerned, we have the following trivial bound from Theorem~\ref{lem:girth_repair}.
\begin{align}
\label{eq:dist_girth1}
d_{\min}(\Cc) \geq g.
\end{align}
One can construct a Tanner graph $\Hc$ corresponding to the graph $\Gc$. The left vertices  and right vertices in this Tanner graph correspond to the edges in the graph $\Gc$ and the vertices in the graph $\Gc$, respectively. The Tanner graph $\Hc$ is a bi-regular bipartite graph with  left degree $2$ and right degree $\Delta$. Moreover, the girth of $\Hc$ is $2g$. We can now use \cite[Theorem 2]{Tanner} to conclude that
\begin{align}
\label{eq:dist_girth2}
d_{\min}(\Cc) \geq \tilde{d}_{\min}\frac{(\tilde{d}_{\min}-1)^{g/2} - 1}{\tilde{d}_{\min} - 2},
\end{align}
where $\tilde{d}_{\min}$ is the minimum distance of the smaller code associated with each vertex in the graph $\Gc$. For our case of $\tilde{d}_{\min}  = 2$,  \eqref{eq:dist_girth2} does not give us anything better than \eqref{eq:dist_girth1}.

\begin{remark}
The relationship between stopping number, the smallest number of erasures that cannot be corrected under iterative decoding, and the girth of the Tanner graph associated with a code have been previously explored in the literature~\cite{stoppingset}. As described above, we can obtain a Tanner graph $\Hc$ corresponding to the graph $\Gc$. This allows us to draw the connections between Theorem~\ref{lem:girth_repair} and the literature on stopping number.
\end{remark}

\begin{remark}
Compared to \eqref{eq:rate1}, this achievability result  has a loss of at most $\frac{\ell}{r+\ell}$ from the optimal possible rate.
\end{remark}

\subsubsection{Comparison with the work in~\cite{KumarTwo}}
\label{subsec:kumar}
Recently, Prakash {\em et al.} study codes which allow for local repair of $2$ erasures~\cite{KumarTwo}. In their model, they perform the repair of the two erasures in a {\em successive manner}, where a parity constraint of weight at most $\tilde{r} + 1$ is used to repair each of the two erasures. In \cite{KumarTwo}, Prakash {\em et al.} show that such codes have their rates upper bounded by $\frac{\tilde{r}}{\tilde{r} + 2}$.

Note that their model can be generalized to $\ell \geq 2$ erasures, and one can consider codes that enable successive local repairs from $\ell$ erasures by contacting $\ell$ parity constraints of weight at most $\tilde{r} + 1$. The codes based on bipartite graphs with high girth, as proposed in this section, fall under this setting. Taking $\tilde{r} = \frac{r}{\ell}$, their rate (cf. \eqref{eq:rate_girth}) is at least $\frac{\tilde{r} - 1}{\tilde{r} + 1}$. Since the upper bound $\frac{\tilde{r}}{\tilde{r} + 2}$ from \cite{KumarTwo} still applies to these codes, they exhibit almost optimal rate.

\subsection{Expander graphs}
\label{sec:expander}

The above analysis of the construction based on bipartite graphs fails to show a high minimum distance on top of the local repair property. However with the graphical construction it is also possible to have high distance, and hence protection against catastrophic failures. Next we show how the expansion property of graphs leads to such conclusion.

\subsubsection{Unbalanced bipartite expanders}
\label{subsec:unbalanced}

Let $\Gc = (\Uc \cup \Vc, \Ec)$ be an unbalanced left regular bipartite graph with $|\Uc| = n \geq |\Vc| = m$ and left degree $h$. We assume that the graph $\Gc$ is an expander graph where expansion happens from left nodes to right nodes. In particular, we assume that for all $\Sc \subseteq \Uc$ such that $|\Sc| \leq \ell$, we have 
\begin{align}
\label{eq:expansion}
\Gamma(\Sc) \geq (1 - \epsilon)h|\Sc|.
\end{align}
Here, $\Gamma(\Sc) \subseteq \Vc$ denotes the set of right nodes that constitute the neighborhood of the nodes in the set $\Sc$. 

We now associate a code symbol with each of the left nodes in the bipartite graph $\Gc$. For $v \in \Vc$, let $\Gamma(v) \subseteq \Uc$ denote the neighborhood of the node $v$. Consider a code $\Cc$ such that for each $v \in \Vc$, the code symbols associated with $\Gamma(v)$ constitute a codeword in a shorter MDS code $\mathcal{C}_0$ with length $\Delta = |\Gamma(v)|$ and minimum distance at least $t+1$. Note that this approach of constructing codes from unbalanced expander graphs is proposed in \cite{SipserSpielman, Tanner} and references therein.

Next, we argue that for small enough $\epsilon$ (cf. \eqref{eq:expansion}), the code $\Cc$ should be able to correct any set of at most $\ell$ erasures. Note that the locality parameter $r$ is dictated by the degrees of the right nodes in the graph $\Gc$.

\begin{theorem}
Let $\Gc$ be an unbalanced (left) expander bipartite graph as defined in \eqref{eq:expansion}. If we have $\epsilon < 1 - \frac{1}{t+1}$, then the code $\Cc$ can be locally repaired from any $\ell$ or less number of  erasures by contacting at most $ \ell \Delta\cdot{\rm rate}(\Cc_0)$ code symbols. 
\end{theorem}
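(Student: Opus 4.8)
The plan is to show that under the expansion hypothesis \eqref{eq:expansion} with $\epsilon < 1 - \frac{1}{t+1}$, no nonzero erasure pattern of size at most $\ell$ can be "stopping" — i.e., there is always a right node whose neighborhood contains exactly one erased left node — and then to peel off erasures one at a time, using the MDS property of $\cC_0$ (distance $t+1 \geq 2$) to recover a single erasure from each such right constraint. First I would set up the key counting lemma: let $\Sc \subseteq \Uc$ be the set of erased left nodes, $|\Sc| = s \leq \ell$. Count edges from $\Sc$ to $\Gamma(\Sc)$: there are exactly $hs$ of them. If every node $v \in \Gamma(\Sc)$ had at least two neighbors in $\Sc$, then $hs = \sum_{v \in \Gamma(\Sc)} |\Gamma(v) \cap \Sc| \geq 2|\Gamma(\Sc)|$, so $|\Gamma(\Sc)| \leq \frac{hs}{2}$. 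But expansion gives $|\Gamma(\Sc)| = \Gamma(\Sc) \geq (1-\epsilon)hs$, and $(1-\epsilon) > \frac{1}{t+1} \geq \frac{1}{2}$ forces $|\Gamma(\Sc)| > \frac{hs}{2}$, a contradiction. Hence some $v \in \Gamma(\Sc)$ has a \emph{unique} neighbor in $\Sc$.

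Next I would use this unique neighbor to do one repair step. The code symbols on $\Gamma(v)$ form a codeword of the MDS code $\cC_0$ of length $\Delta$ and minimum distance at least $t+1 \geq 2$; since exactly one of these symbols (the one at the unique erased node) is unknown and the remaining $\Delta - 1 \geq \Delta - t$ are known, the single erasure is recoverable — an MDS code of distance $\geq 2$ corrects one erasure. Recovering it removes one element from $\Sc$, leaving an erasure set of size $s - 1 \leq \ell$, which still satisfies \eqref{eq:expansion}. I would phrase this as an induction on $s$ (base case $s = 0$ trivial): at each step find a right node with a unique erased neighbor, repair that symbol via $\cC_0$, decrement $s$. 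After at most $\ell$ steps all erasures are corrected.

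Finally I would account for the communication cost. Each repair step reads the $\Delta - 1$ other symbols on some $\Gamma(v)$; more carefully, since $\cC_0$ is MDS with dimension $\dim \cC_0 = \Delta\cdot{\rm rate}(\cC_0)$, recovering one erased symbol needs only $\dim \cC_0 = \Delta\cdot{\rm rate}(\cC_0)$ of the intact symbols on $\Gamma(v)$. Over at most $\ell$ steps this is at most $\ell\,\Delta\cdot{\rm rate}(\cC_0)$ code symbols contacted, as claimed. (One should note that symbols recovered in earlier steps can be reused, so the bound is if anything loose, but this crude count suffices.) The main obstacle — and the only genuinely delicate point — is verifying that the inequality $(1-\epsilon)h s > \frac{hs}{2}$ really is what we need and that it follows from $\epsilon < 1 - \frac{1}{t+1}$: here one only uses $\frac{1}{t+1} \geq \frac{1}{2}$ for $t \geq 1$, so the full strength of the hypothesis $\epsilon < 1 - \frac{1}{t+1}$ is not needed for \emph{single-erasure} peeling — it would be needed if one instead wanted to correct up to $t$ erasures simultaneously inside a single right constraint, which is the more natural place that bound appears. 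I would double-check whether the intended argument peels one erasure at a time (in which case $\epsilon < 1/2$ suffices) or exploits the larger distance $t+1$ more aggressively, and state the theorem's proof in whichever form matches the constant.
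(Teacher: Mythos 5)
Your counting step has a genuine gap that breaks the proof for $t \geq 2$. You argue that if every $v \in \Gamma(\Sc)$ had at least two neighbors in $\Sc$, then $|\Gamma(\Sc)| \leq hs/2$, and conclude that expansion forces $|\Gamma(\Sc)| > hs/2$. But the expansion hypothesis only gives $|\Gamma(\Sc)| \geq (1-\epsilon)hs$ with $1 - \epsilon > \frac{1}{t+1}$; your asserted chain $(1-\epsilon) > \frac{1}{t+1} \geq \frac{1}{2}$ is false whenever $t \geq 2$, since then $\frac{1}{t+1} < \frac{1}{2}$. So for $t \geq 2$ and any $\epsilon \in \bigl[\tfrac{1}{2},\, 1 - \tfrac{1}{t+1}\bigr)$ the theorem's hypothesis is satisfied yet $(1-\epsilon)hs$ can be strictly less than $hs/2$, and no contradiction results: there may be no right node with a \emph{unique} erased neighbor. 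Your single-erasure peeling therefore only proves a weaker theorem (one with hypothesis $\epsilon < 1/2$), not the stated one.

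The paper's proof uses exactly the ``more aggressive'' route you flag at the end as the alternative: instead of seeking a right node with exactly one erased neighbor, seek a right node whose neighborhood contains \emph{at most $t$} erased symbols. Assuming for contradiction that every $v \in \Gamma(\Sc)$ has at least $t+1$ erased neighbors gives $(t+1)|\Gamma(\Sc)| \leq hs$, i.e.\ $|\Gamma(\Sc)| \leq \frac{hs}{t+1}$, which contradicts $|\Gamma(\Sc)| \geq (1-\epsilon)hs > \frac{hs}{t+1}$ under $\epsilon < 1 - \frac{1}{t+1}$. Because $\cC_0$ has minimum distance at least $t+1$, that right constraint can correct all of its $\leq t$ erasures in one step. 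The induction on the number of remaining erasures and the $\ell\,\Delta\cdot\mathrm{rate}(\cC_0)$ accounting then go through as you sketched. So your structure (peel via a counting contradiction plus expansion) is right, but the counting threshold must be $t+1$, not $2$, to match the stated constant; I would commit to that version rather than leaving the choice unresolved.
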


\begin{proof}
We prove the claim using induction on $\ell$. Note that a single erasure can be repaired by using one of the local constraints the erased code symbol participates in. Now assume that at most $\ell - 1$ erasures can be repaired by using local constraints defined by the graph $\Gc$. We now show that any set of $\ell$ erasures can also be repaired using local constraints. 

Let $\Sc \subseteq \Uc$ with $|\Sc| \leq \ell$ denote the set of $\ell$ erased code symbols. In order to repair these $\ell$ erasures, we start with a right node which has at most $t$ of the code symbols associated with its neighborhood in erasure. These $t$ erasures can be corrected under the local constraints satisfied by the code $\Cc$. We can then utilize the inductive hypothesis to complete the proof. 

Note that what remains to be shown is that the desirable right node with at most $t$ associated erasures exists. Towards this, we assume that there is no such right node. In other words, this implies that the induced subgraph $\widehat{\Gc}$ defined by the nodes $\Sc \cup \Gamma(\Sc)$ has at least $t+1$ edges incident on every node in $\Gamma(\Sc)$ from the nodes in $\Sc$. Therefore, we have 
\begin{align}
\label{eq:expansion1}
(t+1)|\Gamma(\Sc)| &\leq  \text{number of edges in $\widehat{\Gc}$} = h|\Sc| \nonumber \\
\Rightarrow |\Gamma(\Sc)| &\leq \frac{h|\Sc|}{t + 1}.
\end{align}
However, for $\epsilon < 1 - \frac{1}{t+1}$, it follows from \eqref{eq:expansion} that 
$$
|\Gamma(\Sc)| > \frac{h|\Sc|}{t+1}.
$$
This along with \eqref{eq:expansion1} leads to a contradiction. Hence, in the presence of at most $\ell$ erasures it is possible to find the desirable right node (with at most $t$ erasures among the code symbols associated with its neighborhood). 

Now the claim that  $r \leq \ell \Delta\cdot{\rm rate}(\Cc_0)$ follows from the fact that correcting each erasure requires contacting at least $\Delta\cdot{\rm rate}(\Cc_0)$ code symbols from a codeword of the shorter code $\Cc_0$.
\end{proof}

Let $\alpha n$ be such that the graph $\Gc$ allows for expansion of all sets $\Sc \subseteq \Uc$ of size at most $\alpha n$ by a factor of at least $(1 - \epsilon)h$, {\em i.e.}\footnote{As shown above, $\alpha n \geq \ell$ is a sufficient condition for the code obtained from the bipartite graph $\Gc$ to be able to allow for cooperative repair of $\ell$ erasures.},
$$
\Gamma(\Sc) \geq (1 - \epsilon)h|\Sc|~~\text{for all}~\Sc \subseteq \Uc~\text{with}~|\Sc| \leq \alpha n.
$$
\begin{proposition}\label{prop:expander}
 For the code $\Cc$ based on the bipartite graph $\Gc$ above and local codes of minimum distance $t+1$, we have, $$d_{\min}(\Cc) \geq \left(2-\epsilon-\frac{\epsilon}{t}\right)\alpha n.$$
 \end{proposition}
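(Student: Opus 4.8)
The plan is to lower-bound the minimum distance by showing that any nonzero codeword must have support of size at least $\left(2-\epsilon-\frac{\epsilon}{t}\right)\alpha n$. Let $\cv$ be a nonzero codeword and let $\Sc = \supp(\cv) \subseteq \Uc$ be the set of left nodes whose associated code symbols are nonzero. First I would argue that $|\Sc| > \alpha n$: if instead $|\Sc| \le \alpha n$, then the expansion hypothesis gives $|\Gamma(\Sc)| \ge (1-\epsilon)h|\Sc|$, and a double-counting of the edges of the induced subgraph on $\Sc \cup \Gamma(\Sc)$ shows that the average number of edges from $\Sc$ into a node of $\Gamma(\Sc)$ is $h|\Sc|/|\Gamma(\Sc)| \le 1/(1-\epsilon)$. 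Since $\epsilon < 1 - \frac{1}{t+1}$ (in fact we will use the weaker $\epsilon \le 1 - \frac1t$ implicitly), this average is strictly less than $t$, so there exists a right node $v \in \Gamma(\Sc)$ with at most $t$ (indeed fewer than $t$, hence at most $t-1$, or at least $\le t$) of its $\Delta$ neighbors in $\Sc$. But the local code $\Cc_0$ at $v$ has minimum distance $t+1$, so a nonzero local codeword cannot have weight between $1$ and $t$; hence the restriction of $\cv$ to $\Gamma(v)$ is zero, contradicting $v \in \Gamma(\Sc)$. Therefore $|\Sc| > \alpha n$.

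Now the heart of the argument: I would not stop at $|\Sc| > \alpha n$ but instead run a more careful counting on a \emph{subset} of $\Sc$ of size exactly (or close to) $\alpha n$ to squeeze out the factor $2 - \epsilon - \epsilon/t$. Pick any $\Sc_0 \subseteq \Sc$ with $|\Sc_0| = \alpha n$ (round as needed). By expansion, $|\Gamma(\Sc_0)| \ge (1-\epsilon)h\alpha n$. Partition $\Gamma(\Sc_0)$ into the set $A$ of right nodes having exactly one neighbor in $\Sc_0$ and the set $B = \Gamma(\Sc_0)\setminus A$ of those having at least two. Counting edges from $\Sc_0$: $h\,|\Sc_0| = h\alpha n \ge |A| + 2|B| = |\Gamma(\Sc_0)| + |B| \ge (1-\epsilon)h\alpha n + |B|$, so $|B| \le \epsilon h \alpha n$ and hence $|A| = |\Gamma(\Sc_0)| - |B| \ge (1-\epsilon)h\alpha n - |B| \ge (1-2\epsilon)h\alpha n$. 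Each $v \in A$ is a right node with a single neighbor $u(v) \in \Sc_0$ in $\Sc_0$; since the local codeword at $v$ is nonzero (because $c_{u(v)} \ne 0$) and $\Cc_0$ has distance $t+1$, at least $t$ of the other $\Delta - 1$ neighbors of $v$ must also carry nonzero symbols, i.e.\ lie in $\Sc \setminus \Sc_0$. So each $v \in A$ witnesses at least $t$ elements of $\Sc \setminus \Sc_0$, while each element of $\Sc \setminus \Sc_0$ can be a neighbor of at most $h$ nodes of $A \subseteq \Vc$. Double counting the edges between $A$ and $\Sc \setminus \Sc_0$ gives $t|A| \le h\,|\Sc \setminus \Sc_0|$, hence $|\Sc \setminus \Sc_0| \ge \frac{t|A|}{h} \ge \frac{t(1-2\epsilon)h\alpha n}{h} = t(1-2\epsilon)\alpha n$. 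Then $|\Sc| = |\Sc_0| + |\Sc\setminus\Sc_0| \ge \alpha n + t(1-2\epsilon)\alpha n$, which already gives something; to land exactly on $\bigl(2-\epsilon-\tfrac{\epsilon}{t}\bigr)\alpha n$ I would instead use the sharper bookkeeping $t|A| \le h|\Sc \setminus \Sc_0|$ together with $|A| \ge |\Gamma(\Sc_0)| - |B|$ and $|B| \le \epsilon h \alpha n$ and $|\Gamma(\Sc_0)| \ge (1-\epsilon)h\alpha n$, and optimize which of the two lower bounds on $|A|$ to feed in, matching constants to produce the claimed $2 - \epsilon - \epsilon/t$.

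The main obstacle I anticipate is purely the constant-chasing in this last step: getting the coefficient to be \emph{exactly} $2 - \epsilon - \tfrac{\epsilon}{t}$ rather than some cruder bound like $1 + t(1-2\epsilon)$ requires choosing the intermediate quantities ($|A|$, $|B|$, the size of $\Sc_0$, and whether one counts edges into $\Sc$ or into $\Sc \setminus \Sc_0$) so that the two contributions $|\Sc_0|$ and $|\Sc \setminus \Sc_0|$ combine correctly; in particular the $\epsilon/t$ term suggests the right split is to write $|\Sc \setminus \Sc_0| \ge \tfrac{t}{h}|A|$ with $|A| \ge (1 - 2\epsilon)h\alpha n$ replaced by the form $|A|\ge (1-\epsilon - \tfrac{\epsilon}{?})\cdot(\text{something})$, and I would verify the bound is tight by tracking equality conditions (all right nodes in $\Gamma(\Sc_0)$ having exactly $2$ or exactly $1$ neighbor in $\Sc_0$, and each node of $\Sc \setminus \Sc_0$ reused the maximum $h$ times). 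A secondary minor point is handling the rounding of $\alpha n$ to an integer and the possibility $t = 1$ (where $\epsilon/t = \epsilon$ and the bound reads $(2 - 2\epsilon)\alpha n$), but these are cosmetic. Everything else — the expansion inequality~\eqref{eq:expansion}, the distance-$(t+1)$ property of $\Cc_0$, and the edge double-counting — is routine.
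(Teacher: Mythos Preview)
Your overall framework---pick $\Sc_0 \subseteq \Sc$ of size $\alpha n$, use expansion to isolate many ``lightly covered'' right nodes, then count edges from $\Sc\setminus\Sc_0$ into them---is exactly the paper's. The missing idea is \emph{which} lightly-covered set to take. You use the set $A$ of \emph{unique} neighbors of $\Sc_0$; your edge count gives $|A|\ge(1-2\epsilon)h\alpha n$, each $v\in A$ contributes at least $t$ edges into $\Sc\setminus\Sc_0$, and you arrive at $|\Sc|\ge\bigl(1+t(1-2\epsilon)\bigr)\alpha n$. That bound is correct (and in fact stronger than the paper's for small $\epsilon$), but it is strictly weaker than $(2-\epsilon-\epsilon/t)\alpha n$ once $\epsilon>t/(2t+1)$, and becomes vacuous at $\epsilon\ge 1/2$, whereas the proposition is intended for all $\epsilon<t/(t+1)$. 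No amount of ``sharper bookkeeping'' with the inequalities you list will close this gap: your estimate $|A|\ge(1-2\epsilon)h\alpha n$ is already tight, so the discrepancy lies in the choice of set, not in slack.

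The paper instead takes $U_t(\Sc_0)=\{v\in\Gamma(\Sc_0):|\Gamma(v)\cap\Sc_0|\le t\}$. The count $|U_t|+(t+1)\bigl(|\Gamma(\Sc_0)|-|U_t|\bigr)\le h\alpha n$, combined with $|\Gamma(\Sc_0)|\ge(1-\epsilon)h\alpha n$, yields $|U_t|\ge(1-\epsilon-\epsilon/t)h\alpha n$; this is where the $\epsilon/t$ you were hunting for originates. Each $v\in U_t(\Sc_0)$ sees a nonzero local codeword of weight at least $t+1$ but has at most $t$ of its support in $\Sc_0$, so at least \emph{one} edge from $v$ lands in $\Sc\setminus\Sc_0$. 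Hence $h|\Sc\setminus\Sc_0|\ge|U_t|$, giving $|\Sc|\ge(2-\epsilon-\epsilon/t)\alpha n$. In short: trade ``exactly one neighbor in $\Sc_0$, contributing $t$ edges out'' for ``at most $t$ neighbors in $\Sc_0$, contributing one edge out''; the latter set is larger in exactly the way needed for the stated constant.
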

A proof of this fact, which is an extension of existing results (such as \cite{SipserSpielman}) is provided in  Appendix \ref{appen:expander}.
  We  further assume that the bipartite graph $\Gc$ is bi-regular with $\Delta$ denoting its right degree, {\em  i.e., $nh = m\Delta$}. Moreover, let $\Cc_0$ represent the shorter code of length $\Delta$ used to define the code $\Cc$. Then we have, 
\begin{align}
{\rm rate}(\Cc) \geq \frac{n - m\Delta(1 -  {\rm rate}(\Cc_0))}{n} = 1 + \frac{h}{\Delta}\frac{r}{\ell} - h, \nonumber
\end{align}
where $r = \ell \Delta\cdot{\rm rate}(\Cc_0)$ denotes the maximum number of intact code symbols that need to be contacted to repair $\ell$ erasures.

\begin{remark}
\label{rem:lossless}
Here, we note that for any constant $\epsilon > 0$ and $\delta < 1$, it is possible to explicitly  construct  unbalanced expander graphs with constant degree $h$, $m = \delta n$ and expansion factor $(1 - \epsilon)h$ for $\Omega(n)$ sized subsets of left vertices~\cite{lossless}.
\end{remark}

\subsubsection{Regular expander graph}
\label{subsec:double_cover}

\begin{figure}[t!]
        \centering
                \includegraphics[width=0.3\textwidth]{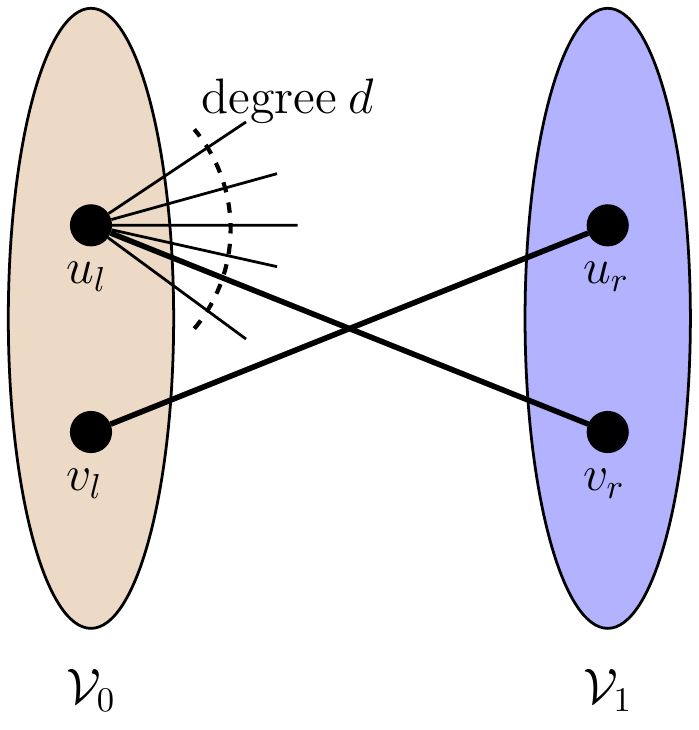}
                \caption{Illustration of the double cover $\widetilde{\Gc}$ of the $\Delta$-regular graph $\Gc$. An edge $(u, v)$ in \newline the original graph $\Gc$ contributes to two edges $(u_l, v_r)$ and $(v_l, u_r)$ in the the bipartite \newline graph $\widetilde{\Gc}$.}
\label{fig:double_cover}
\end{figure}

We now study the cooperative locality of the codes obtained by the double covers of $\Delta$-regular expander graphs~\cite{SipserSpielman}. The analysis of the cooperative locality is based on the analysis of the decoding algorithm for these codes presented in \cite{Zemor}. Note that we naturally modify the decoding algorithm from \cite{Zemor} to perform erasure correction in a cooperative manner.

Let $\Gc = (\Vc, \Ec)$ be a $\Delta$-regular graph with $|\Vc| = N$ and $\lambda$ as the second (absolute) largest eigenvalue of its adjacency matrix\footnote{If $d = \lambda_1 \geq \lambda_2 \geq \lambda_3 \geq \ldots \geq \lambda_N$ be $N$ eigenvalues of the adjacency matrix of $\Gc$, then $\lambda = \max\{\lambda_2, |\lambda_N|\}$.}. Given $\Gc$, we construct a bipartite graph $\widetilde{\Gc} = (\Vc_0 \cup \Vc_1, \widetilde{\Ec})$ with $|\Vc_0| = |\Vc_1| = N$ in the following manner (see Fig.~\ref{fig:double_cover}):
\begin{itemize}
\item Each vertex $u \in \Vc$ in the original graph $\Gc$ corresponds to a left node $u_{l} \in \Vc_0$ and a right node $u_{r} \in \Vc_1$ in the graph $\widetilde{\Gc}$.
\item For a pair of vertices $(u_l, v_r) \in \Vc_0 \times \Vc_1$, there exists an edge $(u_l, v_r) \in \widetilde{\Ec}$ iff there is an edge between the vertices $u$ and $v$ in the original graph $\Gc$, {\em i.e.}, $(u, v) \in \Ec$.
\end{itemize}

The bipartite graph $\widetilde{\Gc}$ is referred to as the double cover of the graph $\Gc$. Note that the bipartite graph $\widetilde{\Gc}$ is $\Delta$-regular with total $n = N\Delta$ edges. Moreover, the following result holds for the bipartite graph $\widetilde{\Gc}$.

\begin{lemma}{\bf (Expander Mixing Lemma)}\cite{AC88}
\label{lem:mixing}
Let $\widetilde{\Gc}$ be the $\Delta$-regular bipartite graph as described above. Then, for every $\Sc \subseteq \Vc_0$ and $\Tc \subseteq \Vc_1$, we have 
\begin{align}
\label{eq:mixing_lemma}
\left|\widetilde{\Ec}(\Sc \times \Tc) - \frac{d |\Sc| |\Tc|}{N}\right| \leq \lambda \sqrt{|\Sc| |\Tc|},  
\end{align}
where $\widetilde{\Ec}(\Sc \times \Tc)$ denotes the collection of the edges from the nodes in the set $\Sc$ to the nodes in the set $\Tc$.
\end{lemma}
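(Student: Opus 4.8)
The plan is to prove the Expander Mixing Lemma for the double cover $\widetilde{\Gc}$ by reducing it to a standard bilinear-form / spectral estimate on the adjacency matrix $\Am$ of the original $\Delta$-regular graph $\Gc$. First I would observe that the biadjacency matrix of $\widetilde{\Gc}$ (rows indexed by $\Vc_0 \cong \Vc$, columns by $\Vc_1 \cong \Vc$) is exactly $\Am$, since an edge $(u_l, v_r)$ exists in $\widetilde{\Gc}$ iff $(u,v) \in \Ec$. Hence for $\Sc \subseteq \Vc_0$ and $\Tc \subseteq \Vc_1$, identifying them with subsets of $\Vc$ and writing $\mathbf{1}_{\Sc}, \mathbf{1}_{\Tc} \in \{0,1\}^N$ for their indicator vectors, we have $\widetilde{\Ec}(\Sc \times \Tc) = \mathbf{1}_{\Sc}^{\top} \Am \mathbf{1}_{\Tc}$.

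Next I would decompose along the eigenbasis of $\Am$. Since $\Gc$ is $\Delta$-regular, $\Am$ is symmetric with an orthonormal eigenbasis $\vv_1, \dots, \vv_N$ where $\vv_1 = \frac{1}{\sqrt N}\mathbf{1}$ has eigenvalue $\lambda_1 = \Delta$, and the remaining eigenvalues satisfy $|\lambda_i| \leq \lambda$ for $i \geq 2$. Expanding $\mathbf{1}_{\Sc} = \sum_i a_i \vv_i$ and $\mathbf{1}_{\Tc} = \sum_i b_i \vv_i$, the leading term is $a_1 b_1 \lambda_1 = \frac{|\Sc|}{\sqrt N}\cdot\frac{|\Tc|}{\sqrt N}\cdot \Delta = \frac{\Delta |\Sc||\Tc|}{N}$, which is precisely the ``expected'' count being subtracted. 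The error term is $\sum_{i \geq 2} \lambda_i a_i b_i$, which by the triangle inequality and $|\lambda_i| \le \lambda$ is bounded by $\lambda \sum_{i\ge 2}|a_i||b_i| \leq \lambda \bigl(\sum_{i\ge2} a_i^2\bigr)^{1/2}\bigl(\sum_{i\ge2} b_i^2\bigr)^{1/2}$ via Cauchy--Schwarz. Finally $\sum_{i\ge2} a_i^2 \leq \sum_i a_i^2 = \|\mathbf{1}_{\Sc}\|^2 = |\Sc|$ and likewise for $\Tc$, giving the bound $\lambda\sqrt{|\Sc||\Tc|}$.

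There is essentially no hard step here — the lemma is classical and the argument is a routine spectral calculation; the statement even comes with a citation to \cite{AC88}. The only point requiring a little care is the bookkeeping of the identification between the double cover's bipartite structure and the original adjacency matrix: one must check that $\widetilde{\Gc}$ being ``$\Delta$-regular'' on both sides is consistent with using $\Am$ as the biadjacency matrix (it is, since every row and column of $\Am$ sums to $\Delta$), and that the normalization of the top eigenvector produces exactly the $\frac{\Delta|\Sc||\Tc|}{N}$ term rather than a constant multiple of it. If one prefers to avoid spelling out the spectral decomposition, an equivalent route is to write $\widetilde{\Ec}(\Sc\times\Tc) - \frac{\Delta|\Sc||\Tc|}{N} = \mathbf{1}_{\Sc}^{\top}\bigl(\Am - \tfrac{\Delta}{N} \Jm\bigr)\mathbf{1}_{\Tc}$ where $\Jm$ is the all-ones matrix, note that $\Am - \tfrac{\Delta}{N}\Jm$ has operator norm at most $\lambda$ (it agrees with $\Am$ on the orthogonal complement of $\mathbf{1}$ and annihilates $\mathbf{1}$), and conclude by $|\xv^{\top} \Mm \yv| \le \|\Mm\|\,\|\xv\|\,\|\yv\|$ with $\|\mathbf{1}_{\Sc}\| = \sqrt{|\Sc|}$, $\|\mathbf{1}_{\Tc}\| = \sqrt{|\Tc|}$. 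I would present this second, cleaner version.
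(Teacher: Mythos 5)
Your proof is correct; it is the standard spectral argument for the bipartite expander mixing lemma, and both the eigenbasis-expansion version and the cleaner operator-norm version you sketch are sound. Note, however, that the paper does not prove this lemma at all — it is stated as a citation to \cite{AC88} — so there is no paper proof to compare against. Two small points worth flagging: (i) the paper's displayed inequality writes $d$ where it should write $\Delta$ (a typo that your derivation implicitly corrects), and (ii) the identification of the double cover's biadjacency matrix with $\Am$ is exactly the bookkeeping step you call out, and it is right to make it explicit since $\widetilde{\Gc}$ has $2N$ vertices but the relevant matrix is still the $N \times N$ matrix $\Am$.
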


Given the bipartite graph $\widetilde{\Gc}$ and a code $\Cc_{0}$ with $\Delta$-symbol long codewords, we define a code $\Cc$ as a slight generalization of the method of Sec.~\ref{sec:girth}. Each edge in $\widetilde{\Gc}$ corresponds to a code symbol in the codewords of $\Cc$. For each node in the bipartite graph $\widetilde{\Gc}$, the $\Delta$ code symbols associated with the $\Delta$ edges incident on the node constitute a codeword in the code $\Cc_{0}$. Note that we assume the local code $\Cc_0$ to be an MDS code throughout this paper.  In Fig.~\ref{fig:algo_mixing}, we present an algorithm which corrects any $\ell$ 
erasures in $\Cc$ an cooperative manner by contacting at most $\ell \Delta \cdot {\rm rate}(\Cc_{0})$ code symbols. 
The algorithm alternates between the left nodes $\Vc_0$ and the right nodes $\Vc_1$ in order to utilize the smaller code  $\Cc_0$ associated with the vertices to correct the erasures.

Let $\Sc^1 \subseteq \Vc_0$ denotes the set of nodes that have erasures among the code symbols associated with their edges and did not attempt to correct those erasures in the first round of the algorithm. This implies that each vertex in $\Sc^1$ has at least $d_{\min}(\Cc_0)$ erasures among the code symbols associated with its $\Delta$ edges. Therefore, we have 
\begin{align}
\label{eq:S0}
|\Sc^1| \leq \frac{\ell}{d_{\min}(\Cc_0)}.
\end{align}

We use $\Sc^{i},~\text{for}~i \geq 2$ to denote the set of (left or right) vertices that have erasures among the $\Delta$ code symbols associated with them in the beginning of $i$-th round and did not attempt to correct those erasures. Note that $\Sc^{i} \subseteq \Vc_0$ and $\Sc^{i} \subseteq \Vc_1$ when $i$ is an odd and even round of decoding, respectively. Next, we employ the expander mixing lemma (cf. Lemma~\ref{lem:mixing}) to show that $\big\{|\Sc^{1}|, |\Sc^{2}|, |\Sc^{3}|, \ldots \big\}$ is a strictly decreasing sequence.

\begin{figure}[t!]
\algrule[1pt]
\textbf{Algorithm:} Erasure correction in $\Cc$.
\algrule[1pt]
\begin{algorithmic}[1]
\REQUIRE A codeword from $\Cc$ with at most $\ell$ erasures.
\STATE $j = 0$.  
\WHILE{not all the erasures are corrected}
\STATE For every vertex $u \in \Vc_j$ such that $1 \leq e \leq d_{\min}(\Cc_0) - 1$ code symbols among $\Delta$ code symbols associated with it are in erasure, use erasure correcting algorithm for $\Cc_0$ to recover from these erasures. 
\STATE $j = j + 1~({\rm mod}~2)$.
\ENDWHILE
\end{algorithmic}
\algrule[1pt]
\caption{Cooperative erasure correction in the code based on the double cover of a regular expander graph.}
\label{fig:algo_mixing}
\end{figure}

\begin{lemma}
\label{lem:decoding_cover}
Let $\Sc^{1}, \Sc^{2}, \ldots$ be the sequence of sets of (left or right) vertices in the bipartite graph $\widetilde{\Gc}$ as defined above. Assume that the minimum distance of $\Cc_0$ is at least $(1 + \epsilon)\lambda$ and $\ell \leq \frac{N\lambda \epsilon \delta}{2} = \frac{n\lambda \epsilon \delta}{2\Delta}$. Then, for $i \geq 1$, we have
\begin{align}
\label{eq:Si}
\big|\Sc^{i+1}\big| \leq \frac{\big|\Sc^{i} \big|}{1+ \epsilon}.
\end{align}
\end{lemma}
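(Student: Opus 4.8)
The plan is to argue that if the "stuck" set $\Sc^{i+1}$ on the right (say $i$ is even, so $\Sc^{i+1}\subseteq\Vc_1$) were too large relative to $\Sc^i\subseteq\Vc_0$, then the Expander Mixing Lemma (Lemma~\ref{lem:mixing}) would force too many edges between $\Sc^i$ and $\Sc^{i+1}$, contradicting a simple edge-counting bound that comes from the fact that every vertex in $\Sc^{i+1}$ carries at least $d_{\min}(\Cc_0)$ erasures, all of which must lead back into $\Sc^i$. First I would fix the key observation: after round $i$, any uncorrected erasure incident to a node $v\in\Vc_1$ must be on an edge whose other endpoint is in $\Sc^i$ (otherwise that edge's erasure would have been repaired when its left endpoint was processed in round $i$, and $v$ would not be stuck). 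Hence every $v\in\Sc^{i+1}$ has at least $d_{\min}(\Cc_0)\ge(1+\epsilon)\lambda$ of its $\Delta$ edges going to $\Sc^i$, which gives the lower bound
\begin{align}
\widetilde{\Ec}(\Sc^i\times\Sc^{i+1}) \;\geq\; (1+\epsilon)\lambda\,|\Sc^{i+1}|. \nonumber
\end{align}

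Next I would apply the Expander Mixing Lemma to the pair $(\Sc^i,\Sc^{i+1})$ to get the upper bound
\begin{align}
\widetilde{\Ec}(\Sc^i\times\Sc^{i+1}) \;\leq\; \frac{\Delta\,|\Sc^i|\,|\Sc^{i+1}|}{N} + \lambda\sqrt{|\Sc^i|\,|\Sc^{i+1}|}. \nonumber
\end{align}
Combining the two displays and dividing through by $|\Sc^{i+1}|$ yields
\begin{align}
(1+\epsilon)\lambda \;\leq\; \frac{\Delta\,|\Sc^i|}{N} + \lambda\sqrt{\frac{|\Sc^i|}{|\Sc^{i+1}|}}. \nonumber
\end{align}
Now I would bound the first term on the right: since $|\Sc^i|$ is controlled by the total number of erasures (for $i=1$ this is \eqref{eq:S0}, and for larger $i$ it is even smaller by the inductive decrease we are proving), and the hypothesis $\ell\le \frac{N\lambda\epsilon\delta}{2\Delta}$ is exactly tuned so that $\frac{\Delta|\Sc^i|}{N}\le \frac{\Delta\ell}{N\,d_{\min}(\Cc_0)}$ is at most something like $\frac{\lambda\epsilon}{2}$ (here $\delta$ presumably enters as a normalization of $d_{\min}(\Cc_0)=\delta\Delta$, which I would substitute in to make the arithmetic close). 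That absorbs "half" of the $\epsilon\lambda$ slack, leaving $\frac{\epsilon}{2}\lambda \le \lambda\sqrt{|\Sc^i|/|\Sc^{i+1}|}$, i.e. $|\Sc^{i+1}|\le \tfrac{4}{\epsilon^2}|\Sc^i|$ — which is the wrong direction and too weak, so I would instead keep the $\lambda$ on the left grouped more carefully: rewrite as $\lambda\big((1+\epsilon) - \sqrt{|\Sc^i|/|\Sc^{i+1}|}\big)\le \frac{\Delta|\Sc^i|}{N}$ and use the erasure bound on the right to show the right-hand side is small enough that $\sqrt{|\Sc^i|/|\Sc^{i+1}|}\ge 1+\epsilon$ up to a negligible correction, hence $|\Sc^{i+1}|\le |\Sc^i|/(1+\epsilon)$, which is exactly \eqref{eq:Si}. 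I would also handle the odd-$i$ case (with $\Sc^i\subseteq\Vc_1$, $\Sc^{i+1}\subseteq\Vc_0$) identically, the Mixing Lemma being symmetric in the two sides.

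The main obstacle I anticipate is the bookkeeping in the last step: getting the constants to line up so that the $\Delta|\Sc^i|/N$ term is provably dominated — this is where the precise constant $\tfrac12$ and the factor $\delta$ in the hypothesis $\ell\le\frac{N\lambda\epsilon\delta}{2\Delta}$ have to be used in just the right way, together with the (a priori) bound $|\Sc^i|\le \ell/d_{\min}(\Cc_0)$ valid for all $i$. A secondary subtlety is making the "every stuck edge points into the previous stuck set" claim fully rigorous, i.e. checking that the algorithm in Fig.~\ref{fig:algo_mixing} really does repair, in round $i$, every erasure incident to a vertex of $\Vc_{j}$ that has fewer than $d_{\min}(\Cc_0)$ erasures — so that a surviving erasure at round $i+1$ genuinely certifies membership of its far endpoint in $\Sc^i$. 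Once those two points are nailed down, the inequality chain above delivers \eqref{eq:Si} directly.
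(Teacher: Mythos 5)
Your plan is correct and is essentially the paper's own argument: lower-bound $\widetilde{\Ec}(\Sc^{i}\times\Sc^{i+1})$ by $d_{\min}(\Cc_0)\,|\Sc^{i+1}|$, upper-bound it by the Expander Mixing Lemma, control the $\Delta|\Sc^i|/N$ term via $|\Sc^i|\le \ell/d_{\min}(\Cc_0)$ together with the hypothesis $\ell\le N\lambda\epsilon\delta/2$, and close with $d_{\min}(\Cc_0)=\delta\Delta\ge(1+\epsilon)\lambda$. The only cosmetic difference is at the finish: the paper applies AM--GM to $\sqrt{|\Sc^i||\Sc^{i+1}|}$ and then isolates $|\Sc^{i+1}|$, whereas you divide through by $|\Sc^{i+1}|$ and work with the ratio $\sqrt{|\Sc^i|/|\Sc^{i+1}|}$; note this gives exactly $\sqrt{|\Sc^i|/|\Sc^{i+1}|}\ge 1+\epsilon/2$ and hence $|\Sc^{i+1}|\le|\Sc^i|/(1+\epsilon/2)^2\le|\Sc^i|/(1+\epsilon)$, so no ``negligible correction'' hedge is needed.
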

\begin{proof}
We prove the relation in \eqref{eq:Si} for $i = 1$; the proof for general $i$ involves steps similar to those in the proof of the $i = 1$ case. Note that each code symbol that is in erasures after the first round of decoding is associated with some edge incident on a left node belonging to the set $\Sc^{1}$.  By the definition of the set $\Sc^{2}$, it has at least $d_{\min}(\Cc_0)$ erasures among the $\Delta$ code symbols associated with it after the first round of decoding. In other words, this implies that each vertex in the set $\Sc^{2}$ has at least $d_{\min}(\Cc_0)$ edges incident on it which are emanating from the vertices from the set $\Sc^{1}$. Therefore, we have
\begin{align}
\label{eq:S_seq}
|\Sc^{2}|d_{\min}(\Cc_0) &\leq |\widetilde{\Ec}(\Sc^{1} \times \Sc^{2})| \nonumber  \\
& \overset{(a)}{\leq}  \frac{\Delta |\Sc^{1}| |\Sc^{2}|}{N}+ \lambda \sqrt{|\Sc^1| |\Sc^2|} \nonumber \\
& \overset{(b)}{\leq}  \frac{\Delta |\Sc^1| |\Sc^2|}{N} + \lambda \frac{|\Sc^1| +  |\Sc^2|}{2} \nonumber \\
& \overset{(c)}{\leq}  \frac{\Delta \ell |\Sc^2|}{N \cdot d_{\min}(\Cc_0)}  + \lambda \frac{|\Sc^1| +  |\Sc^2|}{2},
\end{align}
where $(a)$ and $(c)$ follows from Lemma~\ref{lem:mixing} and \eqref{eq:S0}, respectively. Note that we employ the AM-GM inequality to obtain $(b)$. It follows from \eqref{eq:S_seq} that 
\begin{align}
\label{eq:S_seq1}
\big|\Sc^{2}\big| \leq \frac{\lambda}{2 \cdot d_{\min}(\Cc_0) - \lambda - 2\Delta \ell/(N \cdot d_{\min}(\Cc_0))}\big|\Sc^{1} \big|.
\end{align}
By replacing $d_{\min}(\Cc_0) = \delta \Delta$ in \eqref{eq:S_seq1}, we get
\begin{align}
\label{eq:S_seq2}
\big|\Sc^{2}\big| \leq \frac{\lambda}{2 \delta \Delta - \lambda - 2\ell/(N\delta)}\big|\Sc^{1} \big|.
\end{align}
Under our assumption that $\frac{2\ell}{N\delta} \leq \epsilon \lambda$, it follows from \eqref{eq:S_seq2} that 
\begin{align}
\label{eq:S_seq3}
\big|\Sc^{2}\big| \leq \frac{\lambda}{2 \delta \Delta - (1 + \epsilon)\lambda}\big|\Sc^{1} \big|.
\end{align}
Now, under the assumption that $\delta \Delta \geq (1 + \epsilon)\lambda$, we get from \eqref{eq:S_seq3} that
\begin{align}
\label{eq:S_seq4}
\big|\Sc^{2}\big| \leq \frac{\big|\Sc^{1} \big|}{1 + \epsilon}.
\end{align}
\end{proof}

It follows from the Lemma~\ref{lem:decoding_cover} that in at most logarithmic (in $\ell$) rounds of decoding, the algorithm described in Fig.~\ref{fig:algo_mixing} can correct $\ell$ erasures.

The codes based on the double covers of $\Delta$-regular expander graphs have been studied in the coding theory literature before (see {\em e.g.}~\cite{Zemor}). The rate and the minimum distance of the code $\Cc$ depends on the rate and the minimum distance of the code $\Cc_0$. Note that $\Cc_0$ characterizes the local constraints associated with the vertices in the bipartite graph $\widetilde{\Gc}$. In particular, if ${\rm rate}(\Cc_0)= R$ and $d_{\min}(\Cc_0) = \delta \Delta$, then we have that ${\rm rate}(\Cc) \geq 2R -1$ and $d_{\min}(\Cc) \geq \delta(\delta - \frac{\lambda}{\Delta})n$~\cite{SipserSpielman, Zemor}.

As we show in this section, for an $\epsilon > 0$ and local code $\Cc_0$ such that $d_{\min}(\Cc_0) = \delta \Delta \geq (1 + \epsilon)\lambda$, it is possible to correct $\ell \leq \frac{N\lambda \epsilon \delta}{2}$ erasures using the algorithm described in Fig.~\ref{fig:algo_mixing}. Moreover, in the worst correction of each erasure involves contacting at most 
${\rm rate}(\Cc_0)\Delta \le (1+{\rm rate}(\Cc))\Delta /2$
 other intact code symbols (assuming that the local code $\Cc_0$ is an MDS code). Therefore, the codes based on the double cover of a $\Delta$-regular expander graph and a local code $\Cc_0$ have $(r , \ell)$-cooperative locality for any $\ell \leq \frac{N\lambda \epsilon \delta}{2} = \frac{n\lambda \epsilon \delta}{2\Delta}$ and $ r = \ell(1+{\rm rate}(\Cc))\Delta /2$, that is,
 $$
 {\rm rate}(\Cc) \ge \frac{2r}{\ell \Delta} -1.
 $$

In the next section,  we show an explicit family of algebraic codes that exhibit very strong cooperative local repair property, as 
well as a very high minimum distance.

\section{Cooperative Local Repair for Hadamard Codes}
\label{subsec:Hadamard}

In this section, we study the cooperative locality for punctured Hadamard codes. Punctured Hadamard codes are also referred to as Simplex codes, which are the dual codes of Hamming codes. These codes are well known to be locally decodable codes (LDCs)~\cite{YekhaninLDC} and have multiple disjoint repair groups for each code symbols. Here, we comment on the exact parameters for the cooperative locality of these codes. In particular, we show that an $[n = 2^{k} - 1, k, 2^{k - 1}]_2$ punctured Hadamard code has $(r = \ell + 1, \ell)$-cooperative locality for any $\ell \leq \frac{n-1}{2}$. 

An $[n = 2^{k} - 1, k, 2^{k - 1}]_2$ punctured Hadamard code encodes a $k$ bits long message $(m_1, m_2,\ldots, m_k)$ to an $n = 2^{k}-1$ codeword $\cv  = (c_1, c_2,\ldots, c_{n = 2^{k}-1})$ such that 
$$
c_i  = \sum_{j = 1}^{k}m_jb^i_j~({\rm mod})~2.
$$
Here $\bv^i = (b^{i}_1, b^{i}_2,\ldots, b^{i}_k) \in \mathbb{F}_2^k$ denotes the binary representation of the integer $i \in [2^{k}-1]$. In an $[n = 2^{k} - 1, k, 2^{k-1}]_2$ punctured Hadamard code, we have $c_i + c_{2^j} = c_{i + 2^j}$, where $1 \leq j \leq k-1$~and~$i \in [2^j-1]$. Moreover, we note that an $[n = 2^{k} - 1, k, 2^{k - 1}]_2$ punctured Hadamard code has a particular structural property: for any $2 \leq \widetilde{k} < k$,  the prefix of length $2^{\widetilde{k}} - 1$ of each codeword is a codeword of the $[\widetilde{n} = 2^{\widetilde{k}} - 1, \widetilde{k}, 2^{\widetilde{k} - 1}]_2$ punctured Hadamard code which encodes the message $(m_1, m_2,\ldots, m_{\widetilde{k}})$. We now present the main result of this subsection:

\begin{theorem}
\label{lem:hadamard}
In an $[n = 2^{k} - 1, k, 2^{k - 1}]_2$ punctured Hadamard code, any $1 \leq \ell \leq \frac{n-1}{2}$ erasures can be corrected by contacting at most $\ell + 1$ other code symbols. 
\end{theorem}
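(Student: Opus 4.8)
The plan is to exploit the recursive/prefix structure of the punctured Hadamard code together with the identities $c_i + c_{2^j} = c_{i+2^j}$, and to argue by induction on $k$ (equivalently on the number of erasures $\ell$). The base case $\ell = 1$ is immediate: any single erased symbol $c_i$ with $i = i' + 2^j$ (for a suitable power of two $2^j \le i$) is recovered from $c_{i'}$ and $c_{2^j}$, so two symbols suffice, and $r = \ell + 1 = 2$. For the inductive step, suppose the claim holds for all punctured Hadamard codes of shorter length, and consider an $[n = 2^k - 1, k, 2^{k-1}]_2$ code with an erasure pattern $\mathcal{S}$ of size $\ell \le (n-1)/2 = 2^{k-1} - 1$. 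The key observation is that the coordinates split into a "low half" $L = \{1, \ldots, 2^{k-1} - 1\}$, the single "pivot" coordinate $2^{k-1}$, and a "high half" $H = \{2^{k-1} + 1, \ldots, 2^k - 1\}$, where $H$ is an affine translate of $L$ via $i \mapsto i + 2^{k-1}$, and the prefix on $L$ is itself a codeword of the $[2^{k-1}-1, k-1, 2^{k-2}]_2$ punctured Hadamard code.

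The main step is a counting argument to reduce to the shorter code. Because $|\mathcal{S}| = \ell \le 2^{k-1} - 1$, and $L$ and $H$ each have $2^{k-1} - 1$ coordinates, one of the two halves — say $H$ (the argument is symmetric, using the translation bijection to move erasures between halves) — contains at most $\lfloor \ell/2 \rfloor < 2^{k-2}$ erasures, and in fact we will want to arrange that one half has at most $\lfloor (\ell-1)/2\rfloor$ or fewer so that, after peeling off at most one symbol, the residual problem lives inside a length-$(2^{k-1}-1)$ Hadamard code with strictly fewer erasures. Concretely: if the pivot $c_{2^{k-1}}$ is not erased, then every erased coordinate in $H$, say $i = i' + 2^{k-1}$ with $i' \in L$, satisfies $c_i = c_{i'} + c_{2^{k-1}}$; so an erasure in $H$ can be "folded" onto its partner in $L$ at the cost of reading only the already-available pivot symbol. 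This maps the whole erasure pattern into $L$, yielding an erasure pattern of size at most $\ell$ in the shorter $[2^{k-1}-1, k-1, \cdot]$ code — but we must ensure the folded pattern has size $\le 2^{k-2} - 1$ so the inductive hypothesis applies, which holds since collisions $i' \in \mathcal{S} \cap L$ and $i = i'+2^{k-1} \in \mathcal{S} \cap H$ only shrink the pattern, and $\ell \le 2^{k-1}-1$; if $\ell > 2^{k-2}-1$ one first uses the translation symmetry to pass to whichever half is smaller. If instead the pivot $c_{2^{k-1}}$ is erased, one erased symbol in $H$ (there must be room to find one whose $L$-partner is intact, or else one handles the degenerate all-$H$ case directly) together with its intact partner in $L$ recovers the pivot, reducing to the previous case with one erasure already cleared.

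Finally I would tally the number of symbols contacted. The folding step shows every erasure in $H$ is paid for by the single pivot symbol $c_{2^{k-1}}$, and the recursive call on the length-$(2^{k-1}-1)$ code contacts at most (number of residual erasures) $+\, 1$ symbols by the inductive hypothesis; the careful bookkeeping is to verify that the "$+1$" from the recursion and the pivot symbol do not both get counted, so that the grand total is $\ell + 1$ rather than $\ell + 2$. \textbf{The main obstacle} is exactly this accounting: one must set up the induction so that the contacted set in the recursive subproblem, the pivot, and the partner symbols used in the pivot-recovery case overlap in a controlled way, and so that the reduction always strictly decreases $\ell$ while keeping the residual erasure count within the range $\le 2^{k'-1}-1$ permitted at level $k'$. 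A clean way to manage this is to strengthen the induction hypothesis to assert that the $\ell+1$ contacted coordinates can be taken to include a specified "cheap" coordinate (a power of two), so that successive levels reuse the same pivot-type symbols; getting that strengthened statement to close is where the real work lies.
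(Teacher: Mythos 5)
Your plan is essentially the paper's proof: induct on $k$, use the prefix structure to view the low half $\{1,\dots,2^{k-1}-1\}$ as a punctured Hadamard codeword of dimension $k-1$, use the identity $c_{i+2^{k-1}}=c_i+c_{2^{k-1}}$ to fold between halves, and pass to whichever half sees fewer erasures so the inductive hypothesis applies. The bookkeeping worry you raise is the genuine crux, but the fix is not a strengthened induction hypothesis. The budget closes because a position whose low and high copies are \emph{both} erased costs two of the $\ell$ units but only one unit of the recursive call; tracking these ``doubly erased'' positions (and observing that when there are none, the IH is not needed at all and direct partner-plus-pivot reads already give $\ell+1$) is exactly what makes the $+1$ from the recursion and the pivot contact coexist within $\ell+1$. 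The paper handles this by its two-case split on the erasure count $x$ in the low half and the remark that once the pivot $m_k=c_{2^{k-1}}$ is known the roles of the two halves may be interchanged; your ``pass to whichever half is smaller'' is the same maneuver.
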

\begin{proof}
We prove the Theorem by using induction over $k$. For base case, we consider $k = 2$, where the $[n = 3 = 2^2 - 1, 2, 2]_2$ punctured Hadamard code encodes the message $(m_1, m_2)$ to a codeword $(c_1, c_2, c_3) = (m_1, m_2, m_1 + m_2)$. In this case any $1 \leq \ell \leq \frac{3-1}{2} = 1$ erasure can be recovered by contacting other $\ell + 1 = 2$ code symbols. For example, one can recover $c_2 = m_2$ from $(c_1, c_3) = (m_1, m_1 + m_3)$.

For inductive step, we assume that the Lemma holds for any punctured code of dimension up to $k-1$. Consider the $[n = 2^{k} - 1, k, 2^{k - 1}]_2$ punctured Hadamard code of dimension $k$, and two cases regarding the positions of $\ell$ erased code symbols.

\begin{itemize}
\item \textbf{Case $1$:  There are $x \leq 2^{k-2} - 1$ erasures among the first $\widehat{n} = 2^{k-1} - 1$ code symbols.} Note that the first $\widehat{n} = 2^{k-1} - 1$ code symbols constitute a codeword of an $[\widehat{n} = 2^{k-1} - 1, k - 1, 2^{k-2}]_2$ punctured Hadamard code. Therefore, from the inductive hypothesis, one can correct the $x$ erasures among the first $\widehat{n}$ code symbols by contacting $x + 1$ other code symbols out of these $\widehat{n}$ code symbols. Now, if the symbol $c_{2^{k-1}}$ in erasure, we can recover it by contacting one of the intact symbol among $\{c_{2^{k-1} + 1}, c_{2^{k - 1} + 2},\ldots, c_{n = 2^{k}-1}\}$ say $c_{2^{k-1} + j}$ and the corresponding code symbol $c_{j}$ from the first $\widehat{n}$ code symbols. Now, we can repair the remaining erased symbols among $\{c_{2^{k-1} + 1}, c_{2^{k - 1} + 2},\ldots, c_{n = 2^{k}-1}\}$ from $c_{2^{k-1}}$ and the corresponding code symbol among the first $\widehat{n}$ code symbols. For example, if we want to recover the symbol $c_{2^{k-1} + m}$, we can use $c_{2^{k-1}}$ and $c_{m}$ to reconstruct $c_{2^{k-1} + m}$. In the worst case, we contact $\ell + 1$ code symbols during the repair of all $\ell$ erasures. 

\item \textbf{Case $2$: There are $x \geq 2^{k-2}$ erasures among the first $\widehat{n} = 2^{k-1} - 1$ code symbols.} 
In this case, we first recover the code symbol $c_{2^{k-1}}$, if it is in erasure. Without loss of generality we assume that $c_{2^{k-1}}$ is in erasure. Note that there are $\frac{n-1}{2} = 2^{k-1}$ distinct pairs of code symbols $\{c_i, c_{2^{k-1} + i}\}_{i \in [2^{k-1}]}$ that can recover $c_{2^{k-1}}$. Since we have at most $\frac{n-1}{2} - 1 = 2^{k-1} - 1$ erasures apart from $c_{2^{k-1}}$, one of the $2^{k-1}$ pairs $\{c_i, c_{2^{k-1} + i}\}_{i \in [2^{k-1}]}$ must be intact. This pair allows us to recover $c_{2^{k-1}}$.

Now that we know the symbol $c_{2^{k-1}} = m_k$, we can remove the contribution of $m_k$ from any of the last $2^{k} - 1 - 2^{k}$ code symbols $\{c_{2^{k-1} + 1}, c_{2^{k - 1} + 2},\ldots, c_{n = 2^{k}-1}\}$. Similarly, we can add $m_k$ to any of the first $\widehat{n} = 2^{k-1} - 1$ code symbols $\{c_1, c_2,\ldots, c_{2^{k-1}}\}$. Therefore, we can reduce the Case $2$ to Case $1$ of the proof, and repair any $\ell_1$ erasures by contacting at most $\ell_1 + 1$ code symbols.  
\end{itemize}

Combining both cases completes the proof.
\end{proof}

\begin{remark}
{Note that, for each symbol, the punctured Hadamard code provides $\frac{n-1}{2}$ disjoint repair groups. Moreover, each of these repair groups comprises $2$ symbols. Therefore, it easily follows from the discussion of Section~\ref{sec:disjoint_groups} that the punctured Hadamard code has $(2\ell, \ell)$-cooperative locality for $\ell \leq \frac{n-1}{2}$. Here, we show that these codes allow for more efficient cooperative local repair mechanism by establishing $(\ell+1, \ell)$-cooperative locality for them.}
\end{remark}

\section{Conclusion: comment of random erasures}
\label{sec:conclusion}
All the  constructions of this paper are designed to allow for the cooperative local repairs in the case of adversarial erasure patterns. One can consider the setting where erasures occur according to a random model. Here, we briefly comment on the setting where $\ell$ erasures are uniformly distributed among the code symbols. Moreover, we assume $r$ and $\ell$ to be large enough. In that case, we claim that even the simple Partition codes of Section~\ref{subsec:partition}
are asymptotically optimal.
This is true, because with reasonably high probability (depending on $r$ and $\ell$), every local group (a total $p$ of them) experiences less than about $ t \equiv \Theta\Big(\frac{\ell}{p} \log \frac{\ell}{p} \Big) $ number of erasures. Therefore, with high probability, one can perform cooperative local repair of $\ell$ random erasures even if an $\big(\frac{r}{\ell} + t, \frac{r}{\ell})$-MDS code in employed in the construction of the Partition code (cf. Sec.~\ref{subsec:partition}). This translates to a coding scheme with the overall rate of $\frac{r}{r + \ell t}$. One can take $p$ large enough to optimize this value. 
Indeed, it is possible to attain a rate of $\frac{r}{r + \ell^{1+\epsilon}}$ for some $\epsilon >0$. Comparing with \eqref{eq:rate1}, we see that Partition codes
are near-optimal in this case.
Here note that, it was shown in 
\cite{mazumdar2013local} that  for a random erasure channel, the Partition codes are asymptotically optimal in terms of achieving capacity.

\section*{Acknowledgements}
A. ~Mazumdar's research in this paper is supported by NSF CAREER grant  CCF 1453121 and grant CCF1318093. S.~Vishwanath would like to acknowledge support from Army Research Office under grant W911NF1110258. 

\bibliographystyle{unsrt}
\bibliography{cooperative_local}

\appendices

\section{Part of the Proof of Theorem~\ref{thm:bound1}}
\label{appen:thm1}

Before proceeding with the analysis, we argue the correctness of the algorithm in Fig.~\ref{algo:subcode}. Note that it is always possible to find $\ell$ coordinates $\{i^j_1, i^j_2,\ldots, i^{j}_{\ell}\}$ at line $5$. When the algorithm reaches line $5$, the sub-code $\Cc_{j-1}$ has more than $q^{\ell}$ codewords. Therefore, there must be at least $\ell$ coordinates in the codewords in $\Cc_{j-1}$ that are not fixed in the previous iterations. This also implies that, for $m \in [\ell]$, 
\begin{align}
\label{eq:Ic}
i^{j}_m \notin \Ic_{j - 1} := \bigcup_{j' \in [j-1]}\left(\Rc_{j'} \cup \{i^{j'}_1, \ldots, i^{j'}_{\ell}\}\right) \subset [n].
\end{align}
Note that the code symbols indexed by $\Ic_{j-1}$ are fixed in $\Cc_{j-1}$. This further implies that, 
$$\Rc_j = \Gamma_{\{i^j_1,\ldots, i^j_{\ell}\}} \not\subset \Ic_{j-1},$$ {\em i.e.}, not all of the code symbols contacted to repair the $\ell$ symbols indexed by the set $\{i^j_1,\ldots, i^j_{\ell}\}$ can be fixed in the previous iterations. Otherwise, the symbols indexed by the set $\{i^j_1,\ldots, i^j_{\ell}\}$ would also have been fixed in the previous iterations. 

For the construction of a sub-code as described in Fig.~\ref{algo:subcode}, we define $\Ac_{j} = \Ic_{j} \backslash \Ic_{j - 1} \subseteq \Rc_j \cup \{i^j_1, \ldots, i^j_{\ell}\}$ and $a_j = |\Ac_{j}|$. Assuming that the while loop in Fig.~\ref{algo:subcode} ends with $j = t$, for $j \in [t]$, we have 
$$
\Ic_{j} = \bigcup_{j' \in [j]}\Ac_{j'},
$$
where we take union of the disjoint sets $\Ac_{j'},~j' \in [j].$

Note that none of the indices in the set $\{i^j_1,\ldots, i^j_{\ell}\}$ corresponds to fixed symbols. Thus, by the  definition of $\Ac_j$ and $a_j$, only $a_j - \ell$ code symbols among the code symbols indexed by the set $\Rc_j$ are not fixed in the previous iterations. Hence, at line $7$, there are at most $q^{a_j - \ell}$ possibilities for $\yv_{\,j}$. This implies that
\begin{align}
\label{eq:I_2}
|\cC_j| \geq |\cC_{j-1}|/q^{a_j - \ell}.
\end{align}

The construction of the subcode $\Cc'$ can end at either line $10$ or line $14$. Here we analyze only the case when the construction ends at line $10$. (The similar analysis holds for the other case as well). In this case, we have $|\cC_{t}| \leq  q^{\ell}$, or
\begin{align}
\ell \geq \log_{q}|\cC_{t}|&\geq k - \sum_{j = 0}^{t -1}\left(a_{j+1} - \ell \right).
\end{align}
Now, using that $a_j \leq |\Ac_j| \leq |\Rc_j\cup \{i^j_1,\ldots, i^{j}_{\ell}\}| \leq r + \ell$, we get 
\begin{align}
\label{eq:mid_step}
k - \ell \leq \sum_{j = 0}^{t -1}\left(a_{j+1} - \ell \right) \leq tr.
\end{align}
This implies that 
\begin{align}
\label{eq:no_iter_new}
t \geq \floorb{\frac{k - \ell}{r}}.
\end{align}
Furthermore, for the case when we have $r \geq \ell$, it follows from \eqref{eq:mid_step} that
\begin{align}
\label{eq:no_iter}
t \geq  \ceilb{\frac{k}{r}} - 1. 
\end{align}
Note that sub-code $\cC' = \cC_{t}$. Therefore, 
\begin{align}
\label{eq:new_dim}
\log_q|\cC'| & = \log_q|\cC_{t}| \nonumber \\
&\geq \log_q|\cC| - \sum_{j = 0}^{t-1}\left(a_{j+1} - \ell\right) \nonumber \\
& = k - \sum_{j = 0}^{t-1}a_{j+1} + t\ell \nonumber \\
&\overset{(a)}{=} k - |\Ic_{t}| + t\ell
\end{align}
where $(a)$ follows from the fact that $\Ic_{t}$ is union of the disjoint sets $\Ac_j$.

Now, we define $\cC'' = \cC'|_{\Ic_{t}}$ which denotes the code obtained by puncturing the codewords in $\Cc'$ at the coordinates associated with the set $\Ic_t$. We have $|\cC''| = |\cC'|$ and $d_{\min}(\cC'') = d_{\min}(\cC')$. Moreover, the length of the codewords in $\cC''$ is $n - |\Ic_{t}|$. Next, applying the Singleton bound on $\cC''$ gives us
\begin{align}
\label{eq:dmin_mid}
d_{\min}(C) \leq d_{\min}(\cC'') &\leq n - |\Ic_{t}| - \log_q|\cC''| + 1 \nonumber \\
&\leq n - |\Ic_{t}| - (k - |\Ic_{t}| +t\ell) + 1\nonumber \\
& = n - k - t\ell + 1,
\end{align}

It follows from \eqref{eq:dmin_mid} and \eqref{eq:no_iter_new} that
\begin{align}
d_{\min}(\cC) \leq n - k +1 - \ell\floorb{\frac{k -\ell}{r}}.
\end{align}

For the setting where we have $r \geq \ell$, we can use \eqref{eq:dmin_mid} along with \eqref{eq:no_iter} to obtain that 
\begin{align}
d_{\min}(\cC) \leq n - k +1 - \ell\left(\ceilb{\frac{k}{r}} - 1\right).
\end{align}

This completes the proof.

\section{Proof of Prop.~\ref{prop:expander}}
\label{appen:expander}

Define $U_t(\Ac), \Ac\subset \Uc$, to be the set of neighbors of $\Ac$ such that each vertex of $U_t(\Ac)$ 
is connected to at most $t$ vertices from $\Ac$. Notice that for any $\Ac : |\Ac|  \le \alpha n$, $\Gamma(\Ac)  \ge (1-\epsilon)h|\Ac|$.
Furthermore,
$$
|U_t(\Ac)| + |\Gamma(\Ac) \setminus U_t(\Ac)| (t+1) \le h|\Ac|.
$$
Therefore, $|U_t(\Ac)| \ge (1-\epsilon -\epsilon/t) h|\Ac|$.

For any codeword of $\cC$ whose support is given by the vertex-set $\Sc \subset \Uc$, we must have
$U_t(\cS) = \emptyset$. Clearly, when $|\cS| \le \alpha n$, $|U_t(\cS)| \ge (1-\epsilon -\epsilon/t) h|\Sc|  >0$.
Let us assume,  $|\cS| > \alpha n$ but $|\cS| \le (2-\epsilon- \epsilon/t) \alpha n$. Let $\cQ$ be a proper subset of $\cS$ such that $
|\cQ| = \alpha n$. The number of edges coming out of $\cS \setminus \cQ$ is $h (|\cS| - \alpha n) \le h(1-\epsilon -\epsilon/t) \alpha n$.
On the other hand, $U_t(\cQ) \ge (1-\epsilon -\epsilon/t) h \alpha n$. Hence $U_t(\cS) \ne \emptyset$.

This proves that the minimum distance of the expander code is at least $(2-\epsilon- \epsilon/t) \alpha n$.

\end{document}